\newcommand{\RN}[1]{%
	\textup{\uppercase\expandafter{\romannumeral#1}}%
}
\def\bp{{\bar\partial}}
\def\pa{\partial}
\def\wh{\widehat}
\def\wt{\widetilde}
\def\SS{\mathcal{S}}
\def\C{\mathbb{C}}
\def\P{\mathbf{P}}
\def\R{\mathbb{R}}
\newcommand{\erfc}{\operatorname{erfc}}
\newcommand{\re}{\operatorname{Re}}
\newcommand{\Ext}{\operatorname{Ext}}
\newcommand{\Int}{\operatorname{Int}}
\tikzset{->-/.style={decoration={
			markings,
			mark=at position #1 with {\arrow{latex}}},postaction={decorate}}}
\tikzset{-<-/.style={decoration={
			markings,
			mark=at position #1 with {\arrowreversed{latex}}},postaction={decorate}}}
\tikzset{cross/.style={cross out, draw, 
		minimum size=2*(#1-\pgflinewidth), 
		inner sep=0pt, outer sep=0pt}}
\definecolor{dullmagenta}{rgb}{0.4,0,0.4}   
\definecolor{darkblue}{rgb}{0,0,0.4}
\theoremstyle{plain}
\newtheorem*{thm*}{Theorem}
\newtheorem{thm}{Theorem}[section]
\newtheorem{lem}[thm]{Lemma}
\newtheorem{prop}[thm]{Proposition}
\newtheorem*{prop*}{Proposition}
\newtheorem*{lem*}{Lemma}
\theoremstyle{definition}
\newtheorem*{eg*}{Example}
\newtheorem*{egs*}{Examples}
\newtheorem*{def*}{Definition}
\newtheorem*{Q*}{Question}
\theoremstyle{remark}
\newtheorem*{rmk*}{Remark}
\newtheorem*{rmks*}{Remarks}
\numberwithin{equation}{section}
\begin{document}
\title[Determinantal Coulomb gas ensembles of lemniscate archipelago type]{Determinantal Coulomb gas ensembles with a class of discrete rotational symmetric potentials }

\author{Sung-Soo Byun}
\address{Center for Mathematical Challenges, Korea Institute for Advanced Study, 85 Hoegiro, Dongdaemun-gu, Seoul 02455, Republic of Korea}
\email{sungsoobyun@kias.re.kr}

\author{Meng Yang}
\address{Department of Mathematical Sciences, University of Copenhagen, Copenhagen, Universitetsparken 5, 2100 København Ø, Denmark}
\email{my@math.ku.dk}

\date{\today}

\thanks{Sung-Soo Byun was supported by Samsung Science and Technology Foundation (SSTF-BA1401-51), by the National Research Foundation of Korea (NRF-2019R1A5A1028324) and by a KIAS Individual Grant (SP083201) via the Center for Mathematical Challenges at Korea Institute for Advanced Study. 
Meng Yang was supported by VILLUM FOUNDEN research grant no. 29369 and the NNF grant NNF20OC0064428.}

\begin{abstract}
We consider determinantal Coulomb gas ensembles with a class of discrete rotational symmetric potentials whose droplets consist of several disconnected components. 
Under the insertion of a point charge at the origin, we derive the asymptotic behaviour of the correlation kernels both in the macro- and microscopic scales.
In the macroscopic scale, this particularly shows that there are strong correlations among the particles on the boundary of the droplets. 
In the microscopic scale, this establishes the edge universality. 
For the proofs, we use the nonlinear steepest descent method on the matrix Riemann-Hilbert problem to derive the asymptotic behaviours of the associated planar orthogonal polynomials and their norms up to the first subleading terms. 
\end{abstract}

\maketitle

\section{Introduction and main results}
We consider a configuration $\{z_j\}_{1}^N$ of $N$ points in $\C$ with joint probability distribution 
\begin{equation} \label{Gibbs}
 d \P_N =\frac{1}{Z_N} \prod_{j>k} |z_j-z_k|^2 \prod_{j=1}^N e^{-N Q(z_j)}\, dA(z_j), \qquad dA(z):=\frac{d^2z}{\pi},
\end{equation}
where $Z_N$ is the normalisation constant and $Q:\C \to \R$ is a suitable function called external potential. 
The ensemble \eqref{Gibbs} corresponds to the eigenvalue system of the random normal matrix model, which can be interpreted as the two-dimensional Coulomb gas ensemble at a specific inverse temperature $\beta=2$.
For a recent account of the theory and various topics on the Coulomb gas ensemble, we refer the reader to \cite{Lewin22} and references therein. 

By definition, the $k$-point correlation function $R_{N,k}$ of the system \eqref{Gibbs} is given by 
\begin{equation} \label{RNk}
	R_{N,k}(z_1,\cdots,z_k):= \frac{N!}{(N-k)!} \int_{ \C^{n-k} }    \P_N  \prod_{j=k+1}^{N} dA(z_j). 
\end{equation}
The normalised $1$-point function $\frac{1}{N}R_{N,1}$ corresponds to the macroscopic density of the model.
It is well known that as $N \to \infty$, the empirical measure of $\{ z_j \}_1^N$ converges to Frostman's equilibrium measure, see e.g. \cite{MR3820329,MR4244340}. 
In particular, the system $\{ z_j \}_1^N$ tends to occupy certain compact set $S$ called the droplet. 

The $k$-point function $R_{N,k}$ can be effectively analysed in terms of the correlation kernel. 
To be more concrete, let $p_k \equiv p_{k,N}$ be the $k$:th orthonormal polynomial with respect to the weighted Lebesgue measure $e^{-NQ}\,dA$:
\begin{equation} \label{OP general}
\int_\C p_j(z)\overline{ p_k(z) } e^{-NQ(z)}\,dA(z)=\delta_{jk}, 
\end{equation}
where $\delta_{jk}$ is the Kronecker delta.
We write
\begin{equation} \label{KN ONP}
    K_N(z,w) =e^{-\frac{N}{2} (Q(z) + Q(w))  } \sum_{j=0}^{N-1} p_{j}(z)\overline{p_{j}(w)} 
\end{equation}
for the weighted reproducing kernel of analytic polynomials (of degree less than $N-1$) in $L^2(e^{-NQ} \,dA)$.
Then the $k$-point function $R_{N,k}$ in \eqref{RNk} is expressed as
\begin{equation} \label{RNk det}
R_{N,k}(z_1,\cdots,z_k)= \det \Big[  K_N(z_j,z_l) \Big]_{j,l=1}^k.
\end{equation}
We mention that the correlation kernel can be defined up to a sequence of cocycles, i.e. 
\begin{equation}
\det \Big[  K_N(z_j,z_l) \Big]_{j,l=1}^k=\det \Big[ g_N(z_j) \, \overline{ g_N(z_l)} \cdot K_N(z_j,z_l) \Big]_{j,l=1}^k, 
\end{equation}
where $g_N$ is a continuous unimodular function.

Due to the property \eqref{RNk det}, the system \eqref{Gibbs} is also called the determinantal Coulomb gas ensemble.
Moreover, this naturally calls for the investigation of various asymptotic behaviours of $K_N$ as $N\to\infty$.
Here, one has to distinguish two cases, the \textbf{macroscopic} scale and the \textbf{microscopic} scale. 

\medskip 

The asymptotic behaviour in the microscopic scale is closely related to the universality principle in random matrix theory. 
To describe the local statistics of the model at a given base point $p\in S$, one needs to investigate the asymptotic behaviour of the function
\begin{equation} \label{K rescaling}
(z,w) \mapsto K_N\Big( p+\frac{e^{i\theta}\, z}{\sqrt{N\Delta Q(p)}}\, , p+\frac{e^{i\theta} \, w}{\sqrt{N\Delta Q(p)}} \Big).
\end{equation}
Here if $p \in \pa S$, the angle $\theta \in [0,2\pi)$ is chosen so that $e^{i\theta}$ is outer normal to $\pa S$ at $p$, and otherwise $\theta=0.$
We remark that the specific choice of the rescaling factor $\sqrt{N \Delta Q(p)}$ in \eqref{K rescaling} (which is often called the ``unfolding'') comes from the fact that $\frac{1}{N}R_{N,1}(p) \sim \Delta Q(p)$.

For the bulk case when $p \in \Int S$, it was shown in \cite{ameur2011fluctuations} that for a general external potential $Q$,
\begin{equation} \label{universality bulk}
K_N\Big( p+\frac{z}{\sqrt{N\Delta Q(p)}}\, , p+\frac{w}{\sqrt{N\Delta Q(p)}} \Big) \to G(z,w):= e^{z \bar{w}-\frac{|z|^2}{2}-\frac{|w|^2}{2} }.
\end{equation}
Here $\Int S$ stands for the interior of $S$, the largest open set of $S$, and the universal scaling limit $G$ in \eqref{universality bulk} is called the Ginibre kernel \cite{ginibre1965statistical}. 
For the edge case when $p \in \pa S$, it was shown in a fairly recent work \cite{hedenmalm2017planar} that for a general external potential $Q$,  
\begin{equation} \label{universality edge}
K_N\Big( p+\frac{e^{i\theta}\, z}{\sqrt{N\Delta Q(p)}}\, , p+\frac{e^{i\theta} \, w}{\sqrt{N\Delta Q(p)}} \Big) \to G(z,w)\, \frac12  \erfc\Big( \frac{z+\bar{w}}{\sqrt{2}} \Big).
\end{equation}
The class of potentials $Q$ covered in \cite{hedenmalm2017planar} is quite general but dependent on the topology of the associated droplet.

\medskip

Turning to the macroscopic scale, recently, Ameur and Cronvall \cite{ameur2021szego} made significant results on the asymptotic behaviour of $K_N(z,w)$. 
For the Ginibre ensemble with $Q(z)=|z|^2$, they obtained a precise asymptotic result. 
Namely, it was obtained in \cite[Theorem 1.1]{ameur2021szego} that 
\begin{equation} \label{KN Ginibre Szego}
 K_{N}(z,w) = \sqrt{ \frac{N}{2\pi} } \frac{1}{z\bar{w}-1} (z\bar{w})^N e^{N -\frac{N}{2}(|z|^2+|w|^2) } \cdot  \Big( 1+O(\frac{1}{N}) \Big),
\end{equation}
where $z \not= w$ and $z \bar{w}$ is outside the Szeg\H o curve
\begin{equation} \label{S1 szego}
\mathcal{S}_1:=\{ z \in \C: |z| \le 1, |z \, e^{1-z}|=1 \}.
\end{equation} 
Here, we intentionally add the subscript $1$ since \eqref{S1 szego} can be realised as a special case of $\SS_a$ in \eqref{Sa a>1} below with $a=1$. 
We stress that \cite[Theorem 1.1]{ameur2021szego} indeed provides a closed form of large-$N$ expansions of $K_N.$ 
Let us also mention that \eqref{KN Ginibre Szego} can also be interpreted as an asymptotic result of the incomplete gamma function with complex argument, see \cite[Section 1.4]{ameur2021szego} and \eqref{Q asymp outer}. (Cf. this was crucially used in a recent work \cite{byun2022almost}.)

Beyond the Ginibre ensemble, Ameur and Cronvall considered general external potential $Q$ and derived the uniform asymptotic behaviour of $K_N(z,w)$ for $z,w$ outside the droplet, see \cite[Theorem 1.3]{ameur2021szego}. 
(We also refer to \cite{ADM,MR1371262,molag2022edge} for similar results on the elliptic Ginibre ensemble.)
In particular, they showed that there are strong correlations among the particles on the boundary of the droplet. 
One of the main ingredients in their proof is the asymptotic behaviour of planar orthogonal polynomials \eqref{OP general} due to Hedenmalm and Wennman \cite{hedenmalm2017planar}. 

\medskip 

The above-mentioned results were mainly obtained for the case where the external potential $Q$ is fixed, i.e. independent of $N$.
Nevertheless, the case when $Q$ depends on $N$ is also interesting in particular in the context of the insertion of point charges \cite{ameur2018random} also known as the induced ensembles \cite{MR2881072} or spectral singularities \cite{MR2020225}. 
(Another important example that $N$-dependence of the potential being crucial is the almost-Hermitian regime, see e.g. \cite{AB}.)

Furthermore, in \cite{hedenmalm2017planar} (and also in the follow-up paper \cite{hedenmalm2021soft}), the asymptotic behaviours of planar orthogonal polynomials were constructed in terms of a conformal map from the outside the droplet onto the outside the unit disc. 
Accordingly, the asymptotic result in \cite{hedenmalm2017planar} was obtained for the potential $Q$ whose associated droplet is simply connected as a domain on the Riemann sphere. 
As a consequence, the edge universality \eqref{universality edge} in \cite{hedenmalm2017planar} as well as the Szeg\H o type asymptotic behaviour in \cite{ameur2021szego} were obtained under the assumption that the associated droplet does not have several disconnected components. 

In this work, we aim to provide concrete examples of asymptotic results for the ensembles with a class of $N$-dependent potentials associated with disconnected droplets, see Figure~\ref{Fig:LemAZ}.

\begin{figure}[h!]
    \centering
    \includegraphics[width=0.8\textwidth]{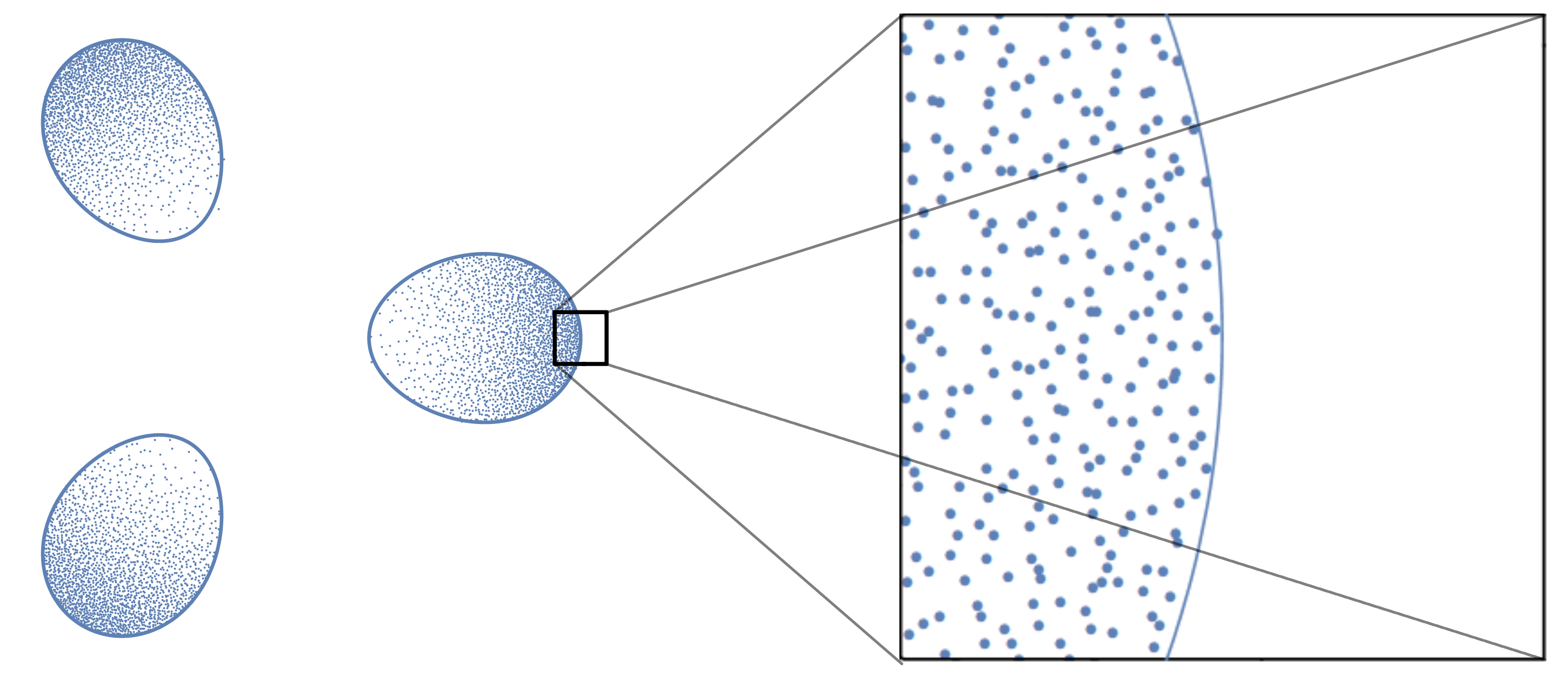}
    \caption{Illustration of the lemniscate archipelago and zooming process}
    \label{Fig:LemAZ}
\end{figure}

\subsection{Main results}

We now precisely introduce our models. 
It is more convenient to begin with a special case when removing the discrete rotational symmetry. 
In this case, the model corresponds to the induced Ginibre ensemble \cite{MR2881072} with the potential
\begin{equation} \label{Q induced Gin}
Q_c(z) \equiv Q_{N,c}(z):=|z|^2-\frac{2c}{N}\log|z-a|, 
\end{equation}
where $c>-1$ and $a \ge 0$. 
From the statistical physics point of view, we insert a point charge $c$ at a given point $a$. 
When $c$ is an integer, the ensemble \eqref{Gibbs} with the potential \eqref{Q induced Gin} can also be realised as the Ginibre ensemble conditioned to have eigenvalue $a$ with multiplicity $c$.

The orthogonal polynomials associated with \eqref{Q induced Gin} reveal a discontinuity at $c=0$. Namely, if $c=0$, since the orthogonal polynomials are simply given by monomials, all the zeros are located at the origin. On the other hand, in \cite{MR3670735}, it was shown that for any $c \not= 0$ and $a>1$, the zeros of orthogonal polynomials tend to occupy the \emph{limiting skeleton} (also known as \emph{mother body}, cf. \cite{MR3289140})
\begin{equation} \label{Sa a>1}
\mathcal{S}_a := \Big \{ z \in \C: \log |z|-a \re z=\log \Big(\frac{1}{a}\Big)-1 \, , \, \re z \le \frac{1}{a} \Big \}. 
\end{equation}
Note that $\SS_a$ crosses the point $1/a$.
The limiting skeleton $\SS_a$ plays an important role in the asymptotic behaviours of the orthogonal polynomials. 
See Figure~\ref{Figure_Sd1} for the shape of $\SS_a$.

\begin{figure}[h!]
	\begin{subfigure}{0.32\textwidth}
		\begin{center}	
			\includegraphics[width=\textwidth]{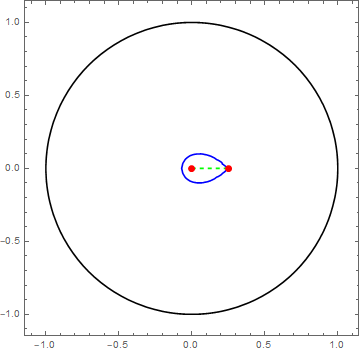}
		\end{center}
		\subcaption{$a=4$}
	\end{subfigure}	
	\begin{subfigure}{0.32\textwidth}
		\begin{center}	
			\includegraphics[width=\textwidth]{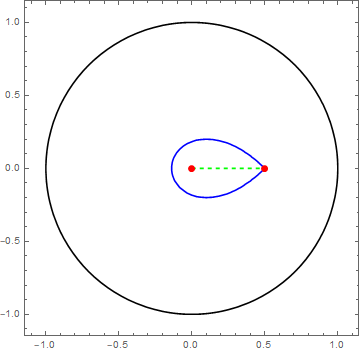}
		\end{center}
		\subcaption{$a=2$}
	\end{subfigure}	
	\begin{subfigure}[h]{0.32\textwidth}
		\begin{center}
			\includegraphics[width=\textwidth]{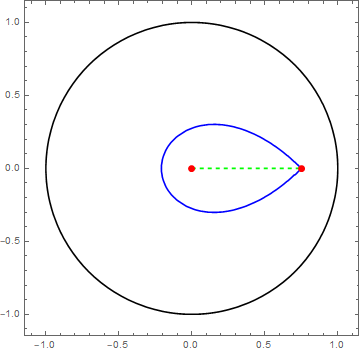}
		\end{center} \subcaption{$a=4/3$}
	\end{subfigure}
	\caption{ The plots display $\SS_a$. The red dots show the origin and $1/a$. The green dashed lines indicate the branch cuts in Theorem~\ref{Thm_induced Ginibre}. } \label{Figure_Sd1}
\end{figure}

In our first result, we obtain the following asymptotic behaviour of $K_N$ in the macroscopic scaling. 

\begin{thm} \label{Thm_induced Ginibre} \textup{\textbf{(Macroscopic asymptotic of the induced Ginibre ensemble)}}
Let $Q$ be the induced Ginibre potential \eqref{Q induced Gin} with $a>1$ and $c>-1$ $(c \not =0)$. Suppose that $z$ and $w$ are outside $\SS_a$, and $|z-w| > \delta $ for some $\delta >0$. Then we have 
\begin{equation} \label{KN induced Szego}
 K_{N}(z,w)  =  \sqrt{ \frac{N}{2\pi} } \frac{1}{z\bar{w}-1}\,\Big( \frac{z}{1-az}  \frac{\bar{w}}{ 1-a\bar{w} } \Big)^c  (z\bar{w})^{N} 
 |(z-a)(w-a)|^c  e^{ N-\frac{N}{2}(|z|^2+|w|^2) } \cdot  \Big(1+O(\frac{1}{N})\Big).
\end{equation}
Here the branch cuts for the variables $z$ and $\bar{w}$ are the line segment $[0,1/a]$.
\end{thm}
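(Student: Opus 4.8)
The plan is to reduce Theorem~\ref{Thm_induced Ginibre} to two ingredients: (i) large-$N$ asymptotics, \emph{including the first subleading term}, of the monic planar orthogonal polynomials $P_n$ for the weight $e^{-N|z|^2}|z-a|^{2c}$ and of their norms $h_n=\int_\C|P_n|^2e^{-N|z|^2}|z-a|^{2c}\,dA$, to be produced by a nonlinear steepest descent analysis of the associated matrix Riemann--Hilbert problem; and (ii) a Szeg\H{o}-type summation of the series in \eqref{KN ONP}, in the spirit of \cite{ameur2021szego}, with the Ginibre polynomials replaced by their $|z-a|^{2c}$-modified counterparts. Throughout put $p_n=P_n/\sqrt{h_n}$, so that $K_N(z,w)=e^{-\frac N2(Q_c(z)+Q_c(w))}\sum_{j=0}^{N-1}P_j(z)\overline{P_j(w)}/h_j$.

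\emph{Riemann--Hilbert analysis.} Recasting the planar orthogonality as a non-Hermitian contour orthogonality (as in \cite{MR3670735}) turns $P_n$ into the solution of a $2\times2$ RHP whose essential contour is the limiting skeleton $\SS_a$ of \eqref{Sa a>1}. One then runs the Deift--Zhou scheme: a $g$-function built from the equilibrium problem with mother body $\SS_a$ (equivalently, the balayage of $\delta_0$ onto $\SS_a$, which has the same logarithmic potential as $\delta_0$ outside $\SS_a$, so that $P_n\sim z^n$ there); the usual conjugation and lens opening making all jumps exponentially close to the identity off $\SS_a$; a global parametrix assembled from the exterior map of the macroscopic droplet $\overline{\D}$ (here simply $z\mapsto z$) together with a scalar Szeg\H{o} function carrying the factor $|z-a|^{2c}$; Airy parametrices at the two points where $\SS_a$ meets $\R$ and a dedicated local parametrix at the corner $z=1/a$; and a small-norm argument giving the error matrix $R=\mathrm{Id}+O(1/N)$. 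Reading off the solution yields, uniformly for $z$ in the unbounded component of $\C\setminus\SS_a$ and for $\delta'N\le n\le N$,
\[
P_n(z)=z^n\Big(\frac{az}{az-n/N}\Big)^c\bigl(1+O(1/n)\bigr),\qquad
h_n=\frac{n!}{N^{n+1}}\,a^{2c}\bigl(1+O(1/n)\bigr).
\]
The $n$-independence of the normalising constant $a^{2c}$ comes out of the elementary identity $|ae^{i\theta}-1|=|e^{i\theta}-a|$, which forces the angular average of $\big|\tfrac{az}{az-n/N}\big|^{2c}|z-a|^{2c}$ over the circle $|z|=\sqrt{n/N}$ to equal $a^{2c}$ for every $n$. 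I expect the construction of the corner parametrix at $1/a$, together with the careful handling of the weight singularity at $z=a$ — which lies in the domain of analyticity of the global parametrix and so must be resolved by its own local parametrix — to be the technical heart of this step; it is here that the hypothesis $a>1$ is used, keeping $a$ off $\SS_a$.

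\emph{Summation.} Inserting these asymptotics into $\sum_{j=0}^{N-1}P_j(z)\overline{P_j(w)}/h_j$, the tail $j<\delta'N$ is exponentially negligible (for those degrees the relevant skeleton lies in the bounded component of $\C\setminus\SS_a$, so the exterior asymptotics still apply, and the very small $j$ are dispatched by crude uniform bounds on $|p_j|$), whence
\[
\sum_{j=0}^{N-1}\frac{P_j(z)\overline{P_j(w)}}{h_j}
=(z\bar w)^c\sum_{j=0}^{N-1}\frac{N^{j+1}(z\bar w)^j}{j!\,\bigl((az-\tfrac jN)(a\bar w-\tfrac jN)\bigr)^c}\bigl(1+O(1/N)\bigr),
\]
the factors $a^{2c}$ and $(az)^c(a\bar w)^c=a^{2c}(z\bar w)^c$ having cancelled. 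For $z,w$ in the range of the statement this series lies in the regime (the counterpart, for $P_n$ instead of monomials, of ``$z\bar w$ outside the Szeg\H{o} curve'' in \eqref{KN Ginibre Szego}) in which it is governed by the terms with $j$ near $N-1$; replacing $j/N$ by $1$ in the slowly varying prefactor and identifying the remaining sum as $Ne^{Nz\bar w}\Gamma(N,Nz\bar w)/\Gamma(N)$, then invoking the uniform large-$N$ asymptotics of the incomplete gamma function exactly as in \cite{ameur2021szego}, gives
\[
\sum_{j=0}^{N-1}\frac{P_j(z)\overline{P_j(w)}}{h_j}
=\sqrt{\frac N{2\pi}}\,\frac{(z\bar w)^Ne^{N}}{z\bar w-1}\Big(\frac{z}{1-az}\Big)^c\Big(\frac{\bar w}{1-a\bar w}\Big)^c\bigl(1+O(1/N)\bigr)
\]
up to a unimodular constant, which is immaterial (a cocycle). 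Since $e^{-\frac N2(Q_c(z)+Q_c(w))}=e^{-\frac N2(|z|^2+|w|^2)}|(z-a)(w-a)|^c$ by \eqref{Q induced Gin}, multiplying the last two displays produces \eqref{KN induced Szego}. The points requiring genuine care are the uniformity of all error terms from the Riemann--Hilbert step over the range of degrees contributing to the sum, and the justification, inside the incomplete-gamma asymptotic, of replacing the $j$-dependent prefactor $\big((az-\tfrac jN)(a\bar w-\tfrac jN)\big)^{-c}$ by its value at $j=N$; the remainder is bookkeeping.
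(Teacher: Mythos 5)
Your proposal takes a genuinely different route from the paper, and a comparison is worthwhile, but it also has a concrete gap that needs to be flagged.

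\textbf{Different route.} The paper never sums the series $\sum_{j=0}^{N-1}P_j(z)\overline{P_j(w)}/h_j$ directly. Its key device is the generalised Christoffel--Darboux identity (Proposition~\ref{Prop_CDI}), which expresses $\bar\partial_w\widetilde K_N^c(z,w)$ in terms of only the three weighted polynomials $\psi_{N-1},\psi_N,\psi_{N+1}$ and the three norms $h_{N-1},h_N,h_{N+1}$; one then integrates in $\bar w$ and fixes the constant by the decay at infinity. This is why Proposition~\ref{Prop_psin diff} and Lemma~\ref{Lem_hn diff} are stated only for degrees $N-1,N,N+1$, and why the \emph{first subleading} terms matter: the leading orders of $\psi_N$ and $z\psi_{N-1}$ cancel, and the formula \eqref{CDI} carries explicit factors of $1/c$ — this is precisely where the hypothesis $c\ne0$ enters. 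Your summation route bypasses the identity entirely; it buys you that you only need \emph{leading}-order asymptotics of the $P_j$ and $h_j$ (so your remark ``including the first subleading term'' is inconsistent with what you actually use), but it charges you the price of needing those asymptotics uniformly over a whole range of degrees $\delta'N\le j\le N$, which the paper does not establish and does not need. Note also that nothing in your argument uses $c\ne0$; a correct argument must either use it or explain why it is unnecessary, since the paper explicitly excludes $c=0$ (and for $c=0$ the kernel reduces to the Ginibre case handled in \cite{ameur2021szego}).

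\textbf{Gap.} Your summation step hinges on replacing the incomplete-gamma series $\sum_{j<N}(Nz\bar w)^j/j!=e^{Nz\bar w}Q(N,Nz\bar w)$ by its endpoint asymptotic \eqref{Q asymp outer}, and on evaluating the slowly varying prefactor $\bigl((az-\tfrac jN)(a\bar w-\tfrac jN)\bigr)^{-c}$ at $j=N$. Both moves require $z\bar w$ to lie outside $\SS_1$; you even write that this is the regime you are in. But the theorem's hypotheses do not force this. For instance with $a=2$, the points $z=0.2i$ and $w=0.3i$ both lie outside $\SS_a$ (as $0.2,\,0.3>1/(ae)$), satisfy $|z-w|=0.1$, yet $z\bar w=0.06$ lies strictly inside $\SS_1$. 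In that case the sum is dominated by the interior saddle at $j\approx N z\bar w$, not by the boundary $j\approx N$, so both the incomplete-gamma replacement and the evaluation of the prefactor at $j/N=1$ are wrong, and the two sides of \eqref{KN induced Szego} would differ by an exponential factor. You need either to show that ``$z,w$ outside $\SS_a$ and $|z-w|>\delta$'' forces $z\bar w$ outside $\SS_1$ (it does not, as the example shows), or to add that hypothesis and explain how the cases $z\bar w\in\Int\SS_1$ are to be handled; as written the step is unjustified precisely where it is needed.

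\textbf{Smaller inaccuracies in the RH outline.} The local parametrix the paper needs at $\beta=1/a$ is built from parabolic cylinder functions (see Section~\ref{Subsec_asymptotic OPs}), not Airy functions; there are no Airy points here, and the other real intersection of $\SS_a$ needs no dedicated parametrix. Likewise the point $z=a$ is far from $\SS_a$ (which lies in $|z|\le1/a<1<a$) and does not require a local parametrix in the contour reformulation; the factor $\bigl(\frac{z-a}{z}\bigr)^c$ is simply carried along with a branch cut on $[0,a]$.
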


Note that if we formally put $c=0$, the formula \eqref{KN induced Szego} corresponds \eqref{KN Ginibre Szego}. 
We mention that the condition $z$ and $w$ being outside the limiting skeleton was also considered in \cite{ADM} for the elliptic Ginibre ensemble. 
(In this case, the limiting skeleton is a line segment connecting two foci of the ellipse.)

In the spirit of the edge universality \eqref{universality edge}, we obtain the following.

\begin{thm}\label{Thm_Boundary iGinibre} \textup{\textbf{(Boundary scaling limits of the induced Ginibre ensemble)}}
Let $Q$ be the induced Ginibre potential \eqref{Q induced Gin} with $a>1$ and $c>-1$. 
Let $p$ be a point on the unit circle. 
Then as $N \to \infty$, we have 
\begin{equation}
\frac{1}{N} K_N\Big(p+\frac{p\,z}{\sqrt{N}}\,,\, p+\frac{p\, w}{\sqrt{N}}\Big) \to G(z,w)\, \frac12 \erfc\Big( \frac{z+\bar{w}}{\sqrt{2}} \Big),
\end{equation}
uniformly for $z,w$ on compact subsets of $\C$.
\end{thm}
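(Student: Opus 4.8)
The plan is to feed the strong asymptotics of the planar orthonormal polynomials $p_n=p_{n,N}$ for the potential \eqref{Q induced Gin} — produced by the matrix Riemann--Hilbert steepest-descent analysis of the paper — into the reproducing kernel \eqref{KN ONP}, and to reduce the computation to the pure Ginibre case $c=0$, whose boundary scaling limit is \eqref{universality edge}. The structural point is that for $a>1$ the droplet equals $\overline{\D}$ for every $c$, the weight $e^{-NQ_c}=\abs{z-a}^{2c}e^{-N\abs z^{2}}$ is real-analytic and non-vanishing on a fixed neighbourhood of $\partial\D$, and the base point $p$ (with $\abs p=1$) stays at a positive distance from the limiting skeleton $\SS_a$ of \eqref{Sa a>1} and from the branch cut $[0,1/a]$ (indeed $\SS_a\subset\{\abs z\le 1/a\}$); hence the whole $N^{-1/2}$-neighbourhood of $p$ lies in the region where the global (outer) parametrix describes $p_n$, and no local parametrix is felt near $\partial\D$.

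The first step is to record, uniformly for $\abs{n-N}\le K\sqrt N$ and for points at distance $O(N^{-1/2})$ from $\partial\D$, the outer asymptotics of $p_n(z)e^{-\frac N2 Q_c(z)}$ together with those of the associated norms. After pulling out the $c$-dependent factor $R^{(c)}(z):=\big(\tfrac{z}{1-az}\big)^{c}\abs{z-a}^{c}$ (up to a positive constant independent of $n$: the exterior Szeg\H o function of $\abs{z-a}^{2c}$ combined with the contribution $\abs{z-a}^{c}$ of $e^{-\frac N2 Q_c}$) these reduce, to the order we need and uniformly in the window, to the pure Ginibre quantities $\sqrt{N^{\,n+1}/n!}\,z^{n}e^{-\frac N2\abs z^2}$; this is the same comparison that matches the $c=0$ specialisation of Theorem~\ref{Thm_induced Ginibre} with \eqref{KN Ginibre Szego}, now carried out in the edge window instead of the outer region $\{\abs{z\bar w}>1\}$. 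Since $\abs{1-ap}=\abs{p-a}$ whenever $\abs p=1$ and $a\in\R$, the factor $R^{(c)}$ has modulus $1$ on $\partial\D$; moreover $R^{(c)}(z)\,\overline{R^{(c)}(w)}$ is precisely the $c$-dependent factor of \eqref{KN induced Szego}, so the contribution it makes to \eqref{KN ONP} tends to $1$ as $z,w\to p$, and the inserted point charge disappears in the limit.

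The second step is the summation. I split $\sum_{j=0}^{N-1}$ in \eqref{KN ONP} at $j=N-K\sqrt N$. For $j\le N-K\sqrt N$ the mass of $\abs{p_j}^{2}e^{-NQ_c}$ sits on $\{\abs z\approx\sqrt{j/N}\}$, uniformly away from $\partial\D$; bounding $\abs{p_j}^{2}e^{-NQ_c}$ near $\partial\D$ by a constant times the Ginibre envelope $\tfrac{N^{j+1}}{j!}e^{-N}$ and recognising $\sum_{j\le N-K\sqrt N}\tfrac{N^{j+1}}{j!}e^{-N}$ as $N$ times a Poisson lower-tail probability (exponentially small in $K^{2}$) gives, after division by $N$, an error exponentially small in $K^{2}$, uniform for the rescaled variables on compact sets, which Cauchy--Schwarz transfers to the off-diagonal block. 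For $N-K\sqrt N<j\le N-1$ I insert the asymptotics of the first step; together with the negligible Ginibre tail, and up to $1+o(1)$ and the factor $R^{(c)}(z)\overline{R^{(c)}(w)}\to1$, this turns $\tfrac1N K_N$ at the rescaled points $Z:=p+pz/\sqrt N$, $W:=p+pw/\sqrt N$ into
\[
 e^{-\frac N2(\abs{Z}^{2}+\abs{W}^{2})}\sum_{j=0}^{N-1}\frac{(N Z\bar W)^{j}}{j!}
 = e^{-\frac N2(\abs{Z}^{2}+\abs{W}^{2})+NZ\bar W}\;\frac{\Gamma(N,\,NZ\bar W)}{\Gamma(N)}.
\]
Here $e^{-\frac N2(\abs Z^{2}+\abs W^{2})+NZ\bar W}$ equals $G(z,w)$ times the unimodular cocycle $e^{i\sqrt N(\im z-\im w)}$ up to $1+o(1)$, while, since $Z\bar W=1+(z+\bar w)/\sqrt N+O(1/N)$, the classical uniform (Temme-type) asymptotics of the regularised incomplete gamma function near argument $1$ give $\Gamma(N,NZ\bar W)/\Gamma(N)\to\tfrac12\erfc\big(\tfrac{z+\bar w}{\sqrt2}\big)$. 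Letting $K\to\infty$ removes the exponentially small tail and yields the claim, the convergence being understood modulo the unimodular cocycle as in \eqref{universality edge}, consistently with the gauge invariance noted after \eqref{RNk det}.

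The main obstacle is the uniformity and precision of the polynomial asymptotics \emph{across the whole degree window}: the steepest-descent analysis and its error bounds must be valid not only at $n=N$ but for all $n=N+O(\sqrt N)$, uniformly in $n$ and in the rescaled spatial variable, and with relative error $o(1)$ once $\Theta(\sqrt N)$ such terms, each of size $\Theta(\sqrt N)$, are summed against the $N^{-1}$ prefactor; in practice one wants $O(N^{-1/2})$ control, which is precisely what the first subleading term of $p_n$ and of the norms provides. The remaining ingredients — the quantitative bound on the low-degree tail and the observation that the branch-cut convention of Theorem~\ref{Thm_induced Ginibre} is irrelevant here because $p$ and its neighbourhood avoid $[0,1/a]$ — are routine.
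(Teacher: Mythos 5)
Your proof takes a genuinely different route from the paper. The paper proves Theorem~\ref{Thm_Boundary iGinibre} by applying the generalised Christoffel--Darboux identity of Proposition~\ref{Prop_CDI}, which expresses $\bp_w\wt K_N^c(z,w)$ entirely in terms of the three weighted polynomials $\psi_{N-1},\psi_N,\psi_{N+1}$ and the three norms $h_{N-1},h_N,h_{N+1}$; the asymptotics of Proposition~\ref{Prop_psin diff} and Lemma~\ref{Lem_hn diff}, evaluated at the rescaled points $z_p,w_p$, then produce $\bp_w[\tfrac1N\wt K_N^c(z_p,w_p)]\to -\tfrac1{\sqrt{2\pi}}e^{-(z+\bar w)^2/2}$, and one integrates, fixing the constant by decay at infinity. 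You instead sum the series \eqref{KN ONP} directly over degrees, split at $j=N-K\sqrt N$, invoke a Poisson-tail bound on the low part, and insert outer-parametrix asymptotics on the window $|n-N|\le K\sqrt N$ to collapse the high part to a regularised incomplete gamma function $\Gamma(N,NZ\bar W)/\Gamma(N)$, finishing with Temme's uniform asymptotics near the transition point.

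The trade-offs are real. The Christoffel--Darboux route needs only three degrees, but because the combinations $\psi_N-z\psi_{N-1}$ and $\psi_{N+1}-z\psi_N$ and the norm differences in \eqref{CDI} have complete leading-order cancellation when $c\neq 0$, the paper is forced to push the Riemann--Hilbert analysis to the first subleading $O(1/N)$ correction (Proposition~\ref{Prop_psin diff}, Lemma~\ref{Lem_hn diff}); once that precision is in hand the proof of the theorem is a short computation. Your summation route avoids any cancellation and only needs relative error $o(1)$ per term, but it needs that uniformly over the entire window $n=N+O(\sqrt N)$ and in the rescaled variable, which the paper never establishes: the paper's Riemann--Hilbert output is pointwise at degree $N$, and the nearby degrees $N\pm1$ are obtained via the exact rescaling identity \eqref{eq op relation}, not via a uniform-in-$n$ estimate. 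You flag this correctly as the main obstacle. It is not a logical flaw in the plan --- the steepest-descent analysis of \cite{MR3670735} can be made uniform in $n$ throughout the bulk of the degree range, and this is standard in the Hedenmalm--Wennman framework --- but it is additional work that the paper sidesteps by its choice of the CD identity, and your proof is not complete without supplying it. One further small point: in your off-diagonal error estimate the relative per-term error must be converted to an absolute error via Cauchy--Schwarz against $\sqrt{K_N(Z,Z)K_N(W,W)}=O(N)$ (yielding $O(\sqrt N)$ total error, hence $o(1)$ after dividing by $N$); you invoke Cauchy--Schwarz only for the tail, but the same device is needed for the accumulated error in the main window, since the terms $p_j(Z)\overline{p_j(W)}$ oscillate in phase.
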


We now discuss the ensemble with discrete rotational symmetry.
For $a \ge 0$ and $d \in \mathbb{N}$, let 
\begin{equation} \label{V Vc}
V(z)=\frac{1}{d}|z^d-a|^2, \qquad V_c(z)\equiv V_{N,c}:=V(z)-\frac{2c}{N}\log|z|.
\end{equation}
We refer to \cite{MR4030288,byun2022characteristic, charlier2021asymptotics1} and references therein for recent studies on such models.
Note that the induced Ginibre potential \eqref{Q induced Gin} corresponds to \eqref{V Vc} with $d=1$ up to a translation.  
It is well known that the droplet $S_V$ associated with the potential $V$ is given by
\begin{equation} \label{SV droplet}
S_V:= \{ z\in \C: |z^d-a| \le 1 \},
\end{equation}
see e.g. \cite[Lemma 1]{balogh2015equilibrium}.
The density with respect to $dA$ is given by
\begin{equation} \label{density lemniscate}
 \Delta V(z)=d\,|z|^{2d-2}. 
\end{equation}
Due to the explicit formula \eqref{SV droplet}, one can easily notice that if $a<1$, $S_V$ is connected. 
On the other hand, if $a>1$, $S_V$ consists of $d$-connected components that we call the \emph{lemniscate archipelago} following \cite{ameur2021szego}, see Figure~\ref{Fig:LemAZ}.

We denote by $q_{j,N}^c$ the orthonormal polynomials associated with the weighted measure $e^{-N V_c}\, dA$:
\begin{equation} \label{OP Vc}
\int_{ \C }  q_{j,N}^c(z) \overline{ q_{k,N}^c (z) } |z|^{2c} e^{-N V(z)}\,dA(z)=\delta_{jk}.
\end{equation}
For $a>1$, it was shown in \cite{MR3668632,MR3670735} that as $j \to \infty$, the (non-trivial) zeros of $q_{j,N}^c$ tend to accumulate on the curve 
\begin{equation} \label{Sa d a>1}
\mathcal{S}_a^d := \Big \{ z \in \C: \log|z^d-a|+a\re z^d=\log \Big(\frac{1}{a}\Big)-1+a^2 \, , \, \re z^d \ge a-\frac{1}{a} \Big \}.
\end{equation}
Notice that \eqref{Sa d a>1} and \eqref{Sa a>1} are related by the mapping $z \mapsto a-z^d$.
See Figure~\ref{Figure_Sad} for the shape of $\SS_a^d$.

\begin{figure}[h!]
	\begin{subfigure}{0.32\textwidth}
		\begin{center}	
			\includegraphics[width=\textwidth]{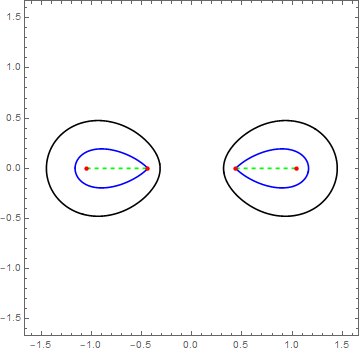}
		\end{center}
		\subcaption{$d=2$}
	\end{subfigure}	
	\begin{subfigure}[h]{0.32\textwidth}
		\begin{center}
			\includegraphics[width=\textwidth]{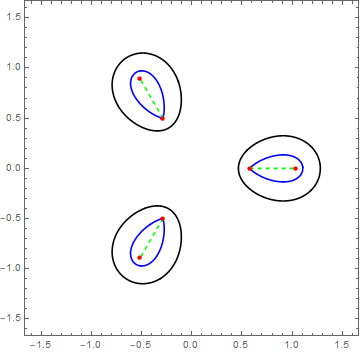}
		\end{center} \subcaption{$d=3$}
	\end{subfigure}
		\begin{subfigure}[h]{0.32\textwidth}
		\begin{center}
			\includegraphics[width=\textwidth]{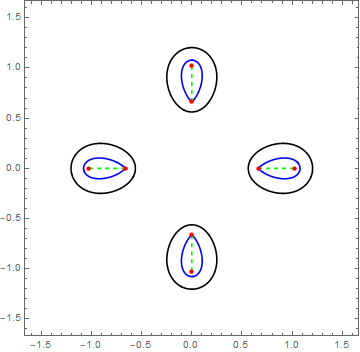}
		\end{center} \subcaption{$d=4$}
	\end{subfigure}
	\caption{ The plots display $\pa S_V$ (black) and $\mathcal{S}_a^d$ (blue), where $a=1.1$. The red dots indicate $(a-\frac{1}{a})^{1/d} \omega^k$ and $a^{1/d} \omega^k$. The green dashed lines are the branch cuts in \eqref{KN asym v2}. }  \label{Figure_Sad}
\end{figure}

Let us consider the associated correlation kernel 
\begin{equation} \label{KN Vc}
K_{N}^{c}(z,w):=|z w|^ce^{ -\frac{N}{2} (V(z)+V(w))  } \sum_{j=0}^{N-1} q_{j,N}^c(z) \overline{q_{j,N}^c(w)}.
\end{equation}
The kernel \eqref{KN Vc} corresponds to the reproducing kernel \eqref{KN ONP} associated with the potential $Q=V_c.$
We derive the asymptotic behaviours of $K_N^c$ in the macroscopic scale. 

\begin{thm} \label{Thm_archipelago} \textup{\textbf{(Macroscopic asymptotic of the lemniscate archipelago)}}
Let $d > 1$, $a>1$ and $c>-1$ be fixed. 
Suppose that $z$ and $w$ are outside $\SS_a^d$, and $|z-w| > \delta $ for some $\delta >0$. 
If $c=0,1,\dots,d-1$, we further assume that $(z^d-a)(\bar{w}^d-a)$ is outside $\SS_1$.
Then as $N \to \infty,$ we have 
\begin{align}
\begin{split} \label{KN asym v2}
	K_{dN}^c(z,w) & = d\sqrt{ \frac{N}{2\pi } } \frac{ ( (z^d-a)(\bar{w}^d-a) )^{N}  }{(z^d-a)(\bar{w}^d-a)-1}\,|z \bar{w}|^{c}     e^{ N -\frac{dN}{2} ( V(z)+V(w) ) } 
	\\
	& \quad \times \Big( \frac{z^d-a}{az^d+1-a^2} \frac{ \bar{w}^d-a }{ a\bar{w}^d+1-a^2 } \Big)^{ \frac{c}{d} }  ((z^d-a)(\bar{w}^d-a))^{\frac{1}{d}-1}
\\
& \quad \times   \frac{ (az^d+1-a^2)(a\bar{w}^d+1-a^2)-z^d(z^d-a) \bar{w}^d ( \bar{w}^d-a )  }{      (az^d+1-a^2)^{\frac{1}{d}}(a\bar{w}^d+1-a^2)^{\frac{1}{d}}-z(z^d-a)^{\frac{1}{d}}\bar{w}( \bar{w}^d-1 )^{\frac{1}{d}}    } \cdot \Big(1+O(\frac{1}{N})\Big).
\end{split}
\end{align}
Here the branch cuts for the variables $z$ and $\bar{w}$ are given by the combination of $d$ line segments connecting $(a-\frac{1}{a})^{1/d} \omega^k$ and $a^{1/d} \omega^k$, where $\omega=e^{2\pi i/d}$ and $k=0,1,\dots,d-1.$ 
\end{thm}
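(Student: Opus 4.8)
The plan is to reduce Theorem~\ref{Thm_archipelago} to the induced Ginibre asymptotics of Theorem~\ref{Thm_induced Ginibre} by exploiting the $d$-fold rotational symmetry of the weight $|z|^{2c}e^{-dN\,V(z)}$. Since $V(\omega z)=V(z)$ with $\omega=e^{2\pi i/d}$ and $|z|^{2c}$ is rotation invariant, this weight is invariant under $z\mapsto\omega z$, and writing the degree as $j=dk+r$ with $0\le r\le d-1$ forces the orthonormal polynomials to factor as $q^{c}_{dk+r,\,dN}(z)=\sqrt{d}\,z^{r}\,\hat p^{(r)}_k(z^{d})$. Substituting $w=z^{d}$ and computing the Jacobian, $\hat p^{(r)}_k$ is the $k$-th orthonormal polynomial for the planar weight $|w|^{2\gamma(r)}e^{-N|w-a|^{2}}\,dA(w)$ with shifted charge $\gamma(r):=\tfrac{r+c+1}{d}-1$; under the reflection $w\mapsto a-w$ this becomes $e^{-NQ_{\gamma(r)}}\,dA$ with $Q_{\gamma(r)}$ exactly the potential \eqref{Q induced Gin} for $N$ points. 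One checks that $\gamma(r)>-1$ always (because $c>-1$) and that $\gamma(r)=0$ for some $r\in\{0,\dots,d-1\}$ precisely when $c\in\{0,1,\dots,d-1\}$, namely $r=d-1-c$.

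Granting this, the kernel decomposes: as $j$ ranges over $0,\dots,dN-1$ the index $k$ runs, for each $r$, exactly over $0,\dots,N-1$, so
\[
\sum_{j=0}^{dN-1} q^{c}_{j,dN}(z)\,\overline{q^{c}_{j,dN}(w)}=d\sum_{r=0}^{d-1}(z\bar w)^{r}\sum_{k=0}^{N-1}\hat p^{(r)}_k(z^{d})\,\overline{\hat p^{(r)}_k(w^{d})},
\]
and the inner sum is $e^{\frac{N}{2}(Q_{\gamma(r)}(a-z^{d})+Q_{\gamma(r)}(a-w^{d}))}$ times the induced Ginibre kernel of charge $\gamma(r)$ at $(a-z^{d},\,a-w^{d})$. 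The map $z\mapsto a-z^{d}$ carries $\SS_a^{d}$ onto $\SS_a$, so the hypothesis that $z,w$ lie outside $\SS_a^{d}$ becomes exactly ``$a-z^{d},\,a-w^{d}$ outside $\SS_a$'', and $|z-w|>\delta$ gives the separation of $z^{d}$ and $w^{d}$ away from the orbit coincidences $w=\omega^{j}z$; the exceptional sector $\gamma(r)=0$ is treated via \eqref{KN Ginibre Szego} instead of Theorem~\ref{Thm_induced Ginibre}, which is precisely where the additional hypothesis that $(z^{d}-a)(\bar w^{d}-a)$ lie outside $\SS_1$ is needed.

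Inserting \eqref{KN induced Szego} (respectively \eqref{KN Ginibre Szego}) into each sector and cancelling the $\gamma(r)$-dependent modulus factors against the Gaussian exponentials, each sector contributes
\[
\sqrt{\tfrac{N}{2\pi}}\,\frac{T^{N}e^{N}}{T-1}\,\mathcal{R}^{\gamma(r)}\,(1+O(\tfrac1N)),\qquad T:=(z^{d}-a)(\bar w^{d}-a),\quad \mathcal{R}:=\frac{(z^{d}-a)(\bar w^{d}-a)}{(az^{d}+1-a^{2})(a\bar w^{d}+1-a^{2})},
\]
with the $O(1/N)$ uniform over the finitely many sectors. Writing $\gamma(r)=\tfrac{c+1}{d}-1+\tfrac{r}{d}$ and summing the geometric series,
\[
\sum_{r=0}^{d-1}(z\bar w)^{r}\mathcal{R}^{\gamma(r)}=\mathcal{R}^{\frac{c+1}{d}-1}\,\frac{(z\bar w\,\mathcal{R}^{1/d})^{d}-1}{z\bar w\,\mathcal{R}^{1/d}-1};
\]
since $(z\bar w\,\mathcal{R}^{1/d})^{d}=(z\bar w)^{d}\mathcal{R}$, clearing the occurrences of $\mathcal{R}$ from the numerator and denominator yields the rational factor of \eqref{KN asym v2}, while combining $\mathcal{R}^{\frac{c+1}{d}-1}$ with the leftover $(az^{d}+1-a^{2})^{\frac1d-1}(a\bar w^{d}+1-a^{2})^{\frac1d-1}$ gives exactly $\mathcal{R}^{c/d}\,T^{\frac1d-1}$. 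Multiplying through by $|zw|^{c}e^{-\frac{dN}{2}(V(z)+V(w))}$ then produces \eqref{KN asym v2}, and the stated branch cuts are the preimages under $z\mapsto a-z^{d}$ of the segment $[0,1/a]$ appearing in Theorem~\ref{Thm_induced Ginibre}.

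The work here is essentially bookkeeping rather than new analysis, so the obstacles are organisational. One must (i) justify the symmetry factorisation and the change of variables carefully; (ii) correctly identify the shifted charges $\gamma(r)$, isolate the exceptional cases $\gamma(r)=0$, and transport both the region of validity and the branch structure through $z\mapsto a-z^{d}$; and (iii) carry out the geometric resummation with a coherent choice of the fractional powers $\mathcal{R}^{1/d}$, $T^{1/d}$, $(az^{d}+1-a^{2})^{1/d}$ so that the $d$ sector contributions collapse exactly to the closed form \eqref{KN asym v2}. I expect step (iii) to be the main obstacle: a misaligned branch in a single sector would destroy the collapse, so the matching of $\mathcal{R}^{\frac{c+1}{d}-1}$ and the denominator powers against $\mathcal{R}^{c/d}T^{\frac1d-1}$, and the identification of the residual numerator and denominator in \eqref{KN asym v2}, must be performed with the branch conventions pinned down consistently throughout.
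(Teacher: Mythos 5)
Your proposal is correct and follows essentially the same route as the paper: your symmetry factorisation is precisely the multi-fold transform of Lemma~\ref{Lem_multifold trans}, and the paper's proof of Theorem~\ref{Thm_archipelago} likewise applies Theorem~\ref{Thm_induced Ginibre} sector-by-sector via~\eqref{whKN wtKN} and then carries out the same geometric resummation over $l=0,\dots,d-1$. You are, if anything, more explicit than the paper about the exceptional sectors where the shifted charge $\tfrac{c+l+1}{d}-1$ vanishes and \eqref{KN Ginibre Szego} must be used in place of Theorem~\ref{Thm_induced Ginibre}, which is exactly what the extra hypothesis on $(z^d-a)(\bar w^d-a)$ lying outside $\mathcal{S}_1$ accounts for.
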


Note that by \eqref{V Vc} and \eqref{SV droplet}, we have 
\begin{equation}
|z^d-a|=1,  \qquad V(z)=\frac{1}{d}, \qquad (z \in \pa S_V).
\end{equation}
Then as an immediate consequence of Theorem~\ref{Thm_archipelago}, we obtain that for $z,w \in \pa S_V$, 
\begin{align}
\begin{split}
\label{KN boundary}
	|K_{dN}^c(z,w)| & = d\sqrt{ \frac{N}{2\pi } } \Big|  \frac{ (z \bar{w})^{c}  }{(z^d-a)(\bar{w}^d-a)-1}    \Big( (az^d+1-a^2)(a\bar{w}^d+1-a^2)  \Big)^{ -\frac{c}{d} }  \Big|
\\
&\quad \times \Big|  \frac{ (az^d+1-a^2)(a\bar{w}^d+1-a^2)-z^d(z^d-a) \bar{w}^d ( \bar{w}^d-a )  }{      (az^d+1-a^2)^{\frac{1}{d}}(a\bar{w}^d+1-a^2)^{\frac{1}{d}}-z(z^d-a)^{\frac{1}{d}}\bar{w}( \bar{w}^d-1 )^{\frac{1}{d}}    } \Big| \cdot \Big(1+O(\frac{1}{N})\Big). 
\end{split}
\end{align}
Thus one can notice that $K_{dN}^c(z,w)=O(\sqrt{N})$ for $z,w \in \pa S_V$, which indicates that there are strong correlations among the particles on the boundary of the droplets. 

To provide a physical realisation of Theorem~\ref{Thm_archipelago}, let us consider the Berezin kernel
\begin{equation} \label{Berezin}
B_N(z,w):=\frac{ R_{N,1}(z)R_{N,1}(w)-R_{N,2}(z,w) }{ R_{N,1}(z) } = \frac{ |K_N^c(z,w)|^2 }{ R_{N,1}(z) }.
\end{equation}
For a given point $z$, the function $w \mapsto B_N(z,w)$ corresponds to the probability density of the ensemble conditioned to have a particle at $z.$ 
See Figure~\ref{Fig_CDensity} for the graphs of $B_N$.

\begin{figure}[h!]
			\begin{subfigure}{0.32\textwidth}
		\begin{center}	
			\includegraphics[width=\textwidth]{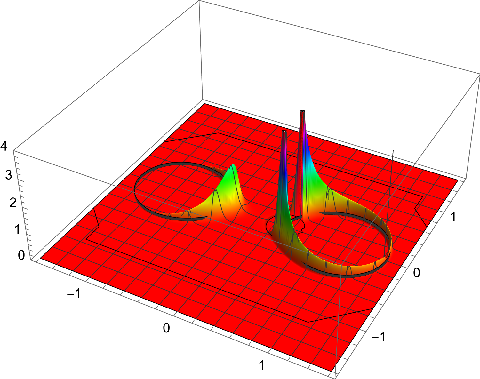}
		\end{center}
		\subcaption{$d=2$}
	\end{subfigure}	
	\begin{subfigure}[h]{0.32\textwidth}
		\begin{center}
			\includegraphics[width=\textwidth]{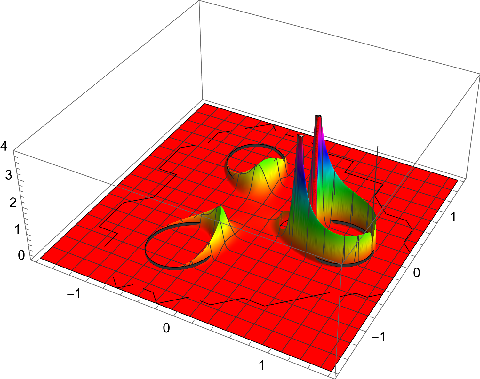}
		\end{center} \subcaption{$d=3$}
	\end{subfigure}
		\begin{subfigure}[h]{0.32\textwidth}
		\begin{center}
			\includegraphics[width=\textwidth]{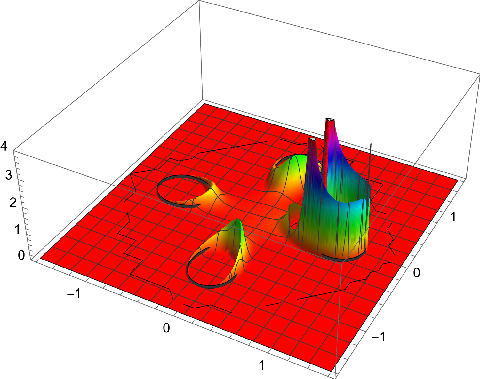}
		\end{center} \subcaption{$d=4$}
	\end{subfigure}
	\caption{ The plots display the approximation of the graphs $w\mapsto B_N(z,w)$ in \eqref{Berezin} (for $w$ away from $z$), where $z=(a-1)^{\frac{1}{d}} \in \pa S_V$ and $a=1.1$. 
	Here $c=0$ and $N=600$. 
	For the approximation, we use \eqref{KN asym v2} and the fact that $R_{N,1}(z) \sim N \Delta V(z)/2$.  } \label{Fig_CDensity} 
\end{figure}

We remark that the asymptotic behaviour \eqref{KN asym v2} may involve special functions (in the subleading terms) with certain periodicity such as Jacobi theta functions as observed in \cite{charlier2021large} for a rotationally symmetric ensemble.  
We refer to \cite{charlier2021asymptotics,MR4381929} for a discussion on similar situations on Hermitian matrix model.

In our final result, we derive the boundary scaling limits.

\begin{thm} \label{Thm_local limit} \textup{\textbf{(Boundary scaling limits of the lemniscate archipelago)}}
Let $p \in \pa S_V$ and choose $\theta$ so that $e^{i\theta}$ is outer normal to $\pa S_V$ at $p$. 
Then as $N \to \infty$, we have 
\begin{equation} \label{KN lem boundary}
\frac{1}{dN \Delta V(p)} K_{dN}^c\Big( p+ \frac{e^{i\theta}\, z}{ \sqrt{dN \Delta V(p)} } \, , \, p+ \frac{e^{i\theta}\, w}{ \sqrt{dN \Delta V(p)} }  \Big) \to  G(z,w)\, \frac12  \erfc\Big( \frac{z+\bar{w}}{\sqrt{2}} \Big),
\end{equation}
uniformly for $z,w$ on compact subsets of $\C$.
\end{thm}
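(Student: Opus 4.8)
The plan is to reduce \eqref{KN lem boundary} to the induced Ginibre boundary limit of Theorem~\ref{Thm_Boundary iGinibre} by exploiting the $d$-fold rotational symmetry $\zeta\mapsto\omega\zeta$, $\omega=e^{2\pi i/d}$, of the potential $V_c$. Writing $j=dm+r$ with $0\le r\le d-1$, invariance of the weight $|\zeta|^{2c}e^{-dNV(\zeta)}$ under $\zeta\mapsto\omega\zeta$ forces $q_{dm+r,dN}^c(\zeta)=\zeta^{r}\widetilde p_m(\zeta^{d})$ for a polynomial $\widetilde p_m$ of degree $m$; the substitution $u=\zeta^{d}$ (a $d$-to-$1$ map whose Jacobian produces the extra weight $|u|^{2\gamma_r}$ with $\gamma_r:=\frac{r+c+1}{d}-1$), followed by the reflection $u=a-v$, identifies $\widetilde p_m/\sqrt d$ with the $m$-th orthonormal polynomial of the induced Ginibre ensemble \eqref{Q induced Gin} with parameter $\gamma_r$ and $N$ particles (note that $j$ runs over $0,\dots,dN-1$ exactly as $m$ runs over $0,\dots,N-1$ for each $r$, and $\gamma_r>-1$ since $c>-1$). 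Using $|v|^2=|\zeta^d-a|^2=dV(\zeta)$ and $|v-a|=|\zeta|^d$ at $v=a-\zeta^d$, the unimodular ambiguities cancel in $q_{dm+r,dN}^c\,\overline{q_{dm+r,dN}^c}$, yielding the exact identity
\begin{equation}\label{master}
K_{dN}^{c}(\zeta,\eta)=d\sum_{r=0}^{d-1}(\zeta\bar\eta)^{r}\,\big(|\zeta|\,|\eta|\big)^{\,d-1-r}\,\widetilde K_N^{\gamma_r}\big(a-\zeta^{d},\,a-\eta^{d}\big),
\end{equation}
where $\widetilde K_N^{\gamma}$ is the kernel \eqref{KN ONP} for the potential $Q_\gamma$ of \eqref{Q induced Gin}. (Combining \eqref{master} with Theorem~\ref{Thm_induced Ginibre}, resp. with \eqref{KN Ginibre Szego} when $\gamma_r=0$, is also the route to Theorem~\ref{Thm_archipelago}.)

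Next I would trace the boundary rescaling through the map $\zeta\mapsto a-\zeta^{d}$, which carries $\pa S_V$ onto the unit circle. Put $P:=a-p^{d}$; then $|P|=1$ and $P\ne a$ (as $a>1$), so $P$ is a regular boundary point of $\D$ with outer unit normal $P$ itself. Using $\sqrt{dN\Delta V(p)}=d|p|^{d-1}\sqrt N$ and the expression $e^{i\theta}=\overline{p^{d-1}}\,(p^{d}-a)/|p|^{d-1}=-|p|^{d-1}P/p^{d-1}$ for the outer unit normal of $\{|\zeta^d-a|=1\}$ at $p$, a first-order Taylor expansion gives
\begin{equation}
a-\Big(p+\frac{e^{i\theta}z}{\sqrt{dN\Delta V(p)}}\Big)^{d}=P+\frac{Pz}{\sqrt N}\,\Big(1+O\big(N^{-1/2}\big)\Big),
\end{equation}
uniformly for $z$ on compacts. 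Hence, substituting $\zeta=p+e^{i\theta}z/\sqrt{dN\Delta V(p)}$ and $\eta=p+e^{i\theta}w/\sqrt{dN\Delta V(p)}$ into \eqref{master}, the arguments $a-\zeta^{d}$ and $a-\eta^{d}$ become precisely the boundary rescaling of the induced Ginibre kernel $\widetilde K_N^{\gamma_r}$ at the point $P$ of the unit circle. Since $\gamma_r>-1$, Theorem~\ref{Thm_Boundary iGinibre} applies to each summand, and by its uniform-on-compacts convergence together with the continuity of $G$ and $\erfc$,
\begin{equation}
\frac1N\,\widetilde K_N^{\gamma_r}\big(a-\zeta^{d},\,a-\eta^{d}\big)\ \longrightarrow\ G(z,w)\,\frac12\erfc\Big(\frac{z+\bar w}{\sqrt 2}\Big)\qquad(r=0,\dots,d-1),
\end{equation}
uniformly for $z,w$ on compact subsets of $\C$.

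To finish, I insert this into \eqref{master}. Because $\zeta,\eta\to p$ as $N\to\infty$, the prefactor satisfies $(\zeta\bar\eta)^{r}(|\zeta|\,|\eta|)^{d-1-r}\to|p|^{2(d-1)}$, which is \emph{independent of $r$}; thus the $d$ summands contribute equally and
\begin{equation}
\frac{1}{dN\Delta V(p)}\,K_{dN}^{c}(\zeta,\eta)=\frac{1}{\Delta V(p)}\sum_{r=0}^{d-1}(\zeta\bar\eta)^{r}\big(|\zeta|\,|\eta|\big)^{d-1-r}\cdot\frac1N\widetilde K_N^{\gamma_r}\big(a-\zeta^{d},a-\eta^{d}\big)\ \longrightarrow\ \frac{d\,|p|^{2(d-1)}}{\Delta V(p)}\,G(z,w)\,\frac12\erfc\Big(\frac{z+\bar w}{\sqrt 2}\Big),
\end{equation}
and $\Delta V(p)=d|p|^{2d-2}$ makes the prefactor equal to $1$, which is \eqref{KN lem boundary}. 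I expect the main obstacle to be the first paragraph: carrying out the $\mathbb{Z}_d$-decomposition and the change of variables precisely enough to obtain \eqref{master} with the correct normalisation $\sqrt d$, the exponent $2\gamma_r$ of $|u|$, and the cancellation of unimodular factors; once \eqref{master} and the geometric compatibility of the two rescalings are in hand, the limit follows immediately from Theorem~\ref{Thm_Boundary iGinibre}.
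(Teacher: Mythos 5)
Your proof is correct and follows essentially the same route as the paper: express $K_{dN}^c$ via the $d$-fold decomposition (the paper's Lemma~\ref{Lem_multifold trans}, which you rederive in the form \eqref{master}), then trace the boundary rescaling through $\zeta\mapsto a-\zeta^d$ and invoke Theorem~\ref{Thm_Boundary iGinibre}. You spell out the geometric bookkeeping — the map $P=a-p^d$, the identity $e^{i\theta}=-P|p|^{d-1}/p^{d-1}$, and the $r$-independence of the limiting prefactor — more explicitly than the paper, which compresses this into a single displayed chain and the remark that the result ``follows from Theorem~\ref{Thm_Boundary iGinibre}.''
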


We emphasise that Theorem~\ref{Thm_local limit} provides an example of edge universality for the ensembles with disconnected droplets that are not covered in \cite{hedenmalm2017planar}. 

\subsection{Outline of the proofs}

The overall strategy of the proofs is as follows.

\begin{itemize}
    \item We use the multi-fold transform of the correlation kernels (Lemma~\ref{Lem_multifold trans}) that relates those of $Q_c$ in \eqref{Q induced Gin} and of $V_c$ in \eqref{V Vc}.  
    Due to this property, one can easily derive Theorems~\ref{Thm_archipelago} and ~\ref{Thm_local limit} from Theorems~\ref{Thm_induced Ginibre} and ~\ref{Thm_Boundary iGinibre}, respectively.
    \smallskip 
    \item We apply the generalised Christoffel-Darboux formula (Proposition~\ref{Prop_CDI}) that allows expressing the correlation kernel only in terms of three monic orthogonal polynomials (of degree $N-1$, $N$, and $N+1$) and their norms. 
    \smallskip 
    \item Using the steepest descent method to the Riemann-Hilbert problem developed in \cite{MR3280250,MR3670735}, we derive the asymptotic behaviours of orthogonal polynomials (Proposition~\ref{Prop_psin diff}) and norms (Lemma~\ref{Lem_hn diff}) up to the first subleading terms. 
    Combined with the Christoffel-Darboux formula, these lead to Theorems~\ref{Thm_induced Ginibre} and ~\ref{Thm_Boundary iGinibre}.
\end{itemize}

The overall strategy described above was introduced in \cite{byun2021lemniscate} to obtain the microscopic limit of the correlation kernel at multi-criticality $a=1$. 
We use this strategy when $a>1$ together with new asymptotic behaviours of orthogonal polynomials and their norms (Proposition~\ref{Prop_psin diff} and Lemma~\ref{Lem_hn diff}). 
These are probably of interest by themselves in the spirit of several works \cite{MR3668632,lee2020strong,MR3849128,MR2921180} on Riemann-Hilbert analysis for planar orthogonal polynomials. 

The rest of this paper is organised as follows.
In Section~\ref{Section_Overall proof}, we present the overall strategy of the proofs in more detail and show our main results. 
However it requires Proposition~\ref{Prop_psin diff} and Lemma~\ref{Lem_hn diff} that are only shown in the following section.  
For the proofs, in Section~\ref{Section_fine asymptotics}, we use the nonlinear steepest descent method to the Riemann-Hilbert problem associated with the orthogonal polynomials.
In Appendix~\ref{Subsec_asymptotic c1}, we present the proofs of Proposition~\ref{Prop_psin diff} and Lemma~\ref{Lem_hn diff}  for the exactly solvable case $c=1$ using well-known properties of some special functions.

\section{Proofs of main results} \label{Section_Overall proof}

In this section, we present the overall strategy of the proofs and show the main results. 
In Subsections~\ref{Subsec_Multifold} and ~\ref{Subsec_CDI} we introduce the multi-fold transform (Lemma~\ref{Lem_multifold trans}) and the generalised Christoffel-Darboux formula. 
In Subsection~\ref{Subsec_fine asymp}, we present asymptotic behaviours of orthogonal polynomials (Proposition~\ref{Prop_psin diff}) and the norms (Lemma~\ref{Lem_hn diff}).
In Subsection~\ref{Subsec_proof thms induced}, we prove Theorems~\ref{Thm_induced Ginibre} and ~\ref{Thm_Boundary iGinibre}. 
In the last subsection, we show Theorems~\ref{Thm_archipelago} and ~\ref{Thm_local limit}.

\subsection{Multi-fold transform} \label{Subsec_Multifold}

We write $p_{j,N}^c$ for the orthonormal polynomials satisfying
\begin{equation}
\int_{ \C }  p_{j,N}^{c}(z) \overline{p_{k,N}^c (z) } |z|^{2c} e^{-N |z-a|^2}\,dA(z)=\delta_{jk}.
\end{equation}
Then the orthogonal polynomials $q_{j,N}^c$ in \eqref{OP Vc} is related to $p_{j,N}^c$ as 	
\begin{equation} \label{ONP transform}
		q_{dj+l,dN}^{c}(z)=\sqrt{d}\,z^l p_{j,N}^{ \frac{c+l+1}{d}-1 }(z^d),
\end{equation}
 see e.g. \cite[Section 3]{MR3849128} and \cite[Section 2]{byun2021lemniscate}.
We now define the correlation kernel
\begin{equation} \label{bfwhKN}
	\widehat{K}_{N}^c(z,w) =(z \bar{w})^ce^{ -\frac{N}{2} (|z-a|^2+|w-a|)^2 } \sum_{j=0}^{N-1} p_{j,N}^c(z) \overline{p_{j,N}^c(w)}. 
\end{equation}
Notice that we use $(z \bar{w})^c$ instead of $|z w|^c$. 

By \eqref{ONP transform}, we have the following multi-fold transform relation, see \cite[Section 2]{byun2021lemniscate} for more detail.  (Cf. this idea appeared also in \cite[Proposition 2.1]{claeys2008universality}, see \cite{MR4229527} for the chiral setup.)
Recall that $K_N^c$ is given by \eqref{KN Vc}.

\begin{lem} \label{Lem_multifold trans}
We have
\begin{equation}\label{multifold trans}
	K_{dN}^c(z,w) =d(z \bar{w})^{d-1} \Big( \frac{|zw|}{z\bar{w}} \Big)^c \sum_{l=0}^{d-1} \widehat{K}_N^{ \frac{c+l+1}{d}-1 }(z^d,w^d).
\end{equation}
\end{lem}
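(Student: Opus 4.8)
The plan is to prove Lemma~\ref{Lem_multifold trans} by direct substitution of the relation \eqref{ONP transform} into the definition of $K_{dN}^c$ and then reorganising the sum. Concretely, I would start from
\begin{equation}
K_{dN}^c(z,w) = |zw|^c e^{-\frac{dN}{2}(V(z)+V(w))} \sum_{j=0}^{dN-1} q_{j,dN}^c(z)\overline{q_{j,dN}^c(w)},
\end{equation}
and split the index $j$ in the range $0\le j\le dN-1$ uniquely as $j = dm+l$ with $0\le l\le d-1$ and $0\le m\le N-1$. Substituting \eqref{ONP transform}, each term becomes
\begin{equation}
q_{dm+l,dN}^c(z)\overline{q_{dm+l,dN}^c(w)} = d\, z^l \bar{w}^l\, p_{m,N}^{\frac{c+l+1}{d}-1}(z^d)\overline{p_{m,N}^{\frac{c+l+1}{d}-1}(w^d)}.
\end{equation}
Summing over $m$ from $0$ to $N-1$ reconstitutes, for each fixed $l$, the polynomial part of $\widehat{K}_N^{\frac{c+l+1}{d}-1}(z^d,w^d)$ from \eqref{bfwhKN}.

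The second ingredient is bookkeeping of the exponential and power prefactors. From \eqref{V Vc}, $V(z) = \frac1d|z^d-a|^2$, so $e^{-\frac{dN}{2}(V(z)+V(w))} = e^{-\frac{N}{2}(|z^d-a|^2+|w^d-a|^2)}$, which is precisely the Gaussian weight appearing in $\widehat{K}_N^{\bullet}(z^d,w^d)$ with $N$ particles and argument $z^d,w^d$. So after pulling the factor $d\,(z\bar w)^l$ out of the $m$-sum and writing $(z\bar w)^l = (z\bar w)^{l+1}/(z\bar w)$, I want to match $(z\bar w)^{\frac{c+l+1}{d}-1}$-type factors; but $\widehat K$ uses $(z\bar w)^c$-powers evaluated at $z^d,\bar w^d$, i.e. a factor $(z^d\bar w^d)^{\frac{c+l+1}{d}-1} = (z\bar w)^{c+l+1-d}$. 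Comparing, the global prefactor $|zw|^c$ carried by $K_{dN}^c$ must be converted: $|zw|^c = (z\bar w)^c \cdot (|zw|/(z\bar w))^c$, and then $(z\bar w)^c \cdot d\,(z\bar w)^l = d\,(z\bar w)^{d-1}\cdot (z\bar w)^{c+l+1-d}$, which is exactly the $(z\bar w)$-power of $\widehat K_N^{\frac{c+l+1}{d}-1}(z^d,w^d)$ times the extra $d\,(z\bar w)^{d-1}$. This accounts for all factors and yields
\begin{equation}
K_{dN}^c(z,w) = d\,(z\bar w)^{d-1}\Big(\frac{|zw|}{z\bar w}\Big)^c \sum_{l=0}^{d-1}\widehat K_N^{\frac{c+l+1}{d}-1}(z^d,w^d),
\end{equation}
as claimed.

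I would present this as a short computation, being careful about two points. First, the distinction between $|zw|^c$ and $(z\bar w)^c$ is genuinely load-bearing here — it is the reason the cocycle factor $(|zw|/(z\bar w))^c$ appears — and I would flag explicitly that for non-integer $c$ one must fix a branch, consistent with the convention in \eqref{KN Vc} versus \eqref{bfwhKN}. Second, I should double-check that the range of $j$, namely $0\le j\le dN-1$, corresponds exactly to $0\le m\le N-1$ and $0\le l\le d-1$ under $j=dm+l$, with no leftover or missing indices; this is the only combinatorial subtlety and it works precisely because the total count is $dN$. I do not anticipate a serious obstacle: the only thing that can go wrong is a mismatch in the power-of-$(z\bar w)$ accounting, so the main effort is simply to track those exponents carefully and invoke \eqref{ONP transform} correctly. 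As the excerpt notes, the identity is essentially that of \cite[Section 2]{byun2021lemniscate}, so I would also cite that for the detailed verification.
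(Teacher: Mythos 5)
Your proof is correct and follows exactly the computation the paper has in mind (the paper does not reproduce it but cites \cite[Section 2]{byun2021lemniscate}): substitute \eqref{ONP transform} into the definition of $K_{dN}^c$, reindex $j=dm+l$, and track the $(z\bar w)$-exponents, with the cocycle factor $(|zw|/(z\bar w))^c$ arising from the mismatch between $|zw|^c$ in \eqref{KN Vc} and $(z\bar w)^c$ in \eqref{bfwhKN}.
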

Note that in the left-hand side of \eqref{multifold trans} we use $dN$ instead of $N$. This is indeed the key observation for such a transform.  
Due to Lemma~\ref{Lem_multifold trans}, it suffices to derive the asymptotics of  $\widehat{K}_N^{c}$.

\subsection{Christoffel-Darboux formula} \label{Subsec_CDI}

One can compute asymptotics of $\bar{\partial}_w \widehat{K}_N^{c}(z,w)$ by virtue of the Christoffel-Darboux formula in \cite[Theorem 3.2]{byun2021lemniscate}. 

For this, we set some notations. 
Let $P_j \equiv P_j^c$ be the \emph{monic} orthogonal polynomial satisfying 
\begin{align} \label{Pj monic}
	\int_{ \C }  P_j(z) \overline{ P_k (z) } |z-a|^{2c} e^{-N |z|^2}\,dA(z)= h_j \, \delta_{jk},
\end{align}
where $h_j$ is the (squared) orthogonal norm. 
Note that we have the following relation 
\begin{equation}
p_{j,N}^c(z)=\frac{1}{\sqrt{h_j}} P_j(a-z). 
\end{equation}

We denote
\begin{equation} \label{W psi phi}
W(z)=(z-a)^c,\qquad	\psi_j(z):= W(z) P_j(z), \qquad \phi_j (z):=W(z) \frac{P_j(z)}{h_j}. 
\end{equation}
Let us define 
\begin{equation}
	\wt{K}_{N}^c(z,w):=((z-a)(\bar{w}-a))^c e^{-N z \bar{w} } \sum_{j=0}^{N-1} \frac{ P_j(z) \overline{ P_j(w) } }{ h_j } =e^{-N z \bar{w} } \sum_{j=0}^{N-1}  \overline{ \phi_j(w) } \psi_j(z).
\end{equation}
The kernel $K_N$ in \eqref{KN ONP} with $Q$ given by \eqref{Q induced Gin} is written in terms of $\wt{K}_{N}^c$ as
\begin{equation} \label{KN KN til}
K_N(z,w)=\Big( \frac{ |(z-a)(\bar{w}-a)| }{ (z-a)(\bar{w}-a) }\Big)^c e^{ -\frac{N}{2}(|z|^2+|w|^2-2z \bar{w}) }  \wt{K}_{N}^c(z,w). 
\end{equation}
Note also that 
\begin{equation}
	\wt{K}_{N}^c(a-z,a-w)=(z \bar{w})^c   e^{-N(a-z)(a-\bar{w})}  \sum_{j=0}^{N-1} p_{j,N}^c(z) \overline{p_{j,N}^c(w)}
\end{equation}
Thus it is related to $\wh{K}_{N}^c$ in \eqref{bfwhKN} as
\begin{equation} \label{whKN wtKN}
\wh{K}_{N}^c(z,w)=e^{ -\frac{N}{2} ( |z-a|^2+|w-a|^2-2(z-a)(\bar{w}-a) ) }  \wt{K}_{N}^c(a-z,a-w). 
\end{equation}
The following version of the Christoffel-Darboux formula was obtained in \cite[Theorem 3.2]{byun2021lemniscate}. 

\begin{prop} \textbf{\textup{(Christoffel-Darboux formula)}}
\label{Prop_CDI} 
Suppose that $a \not=0$ and that 
\begin{equation} \label{assumption_CDI}
\langle z\psi_j | \phi_0 \rangle\neq 0, \qquad \phi_j(a)\neq 0, \qquad \text{for all }j.  
\end{equation}
Then we have the following form of the Christoffel-Darboux identity:
\begin{align} \label{CDI}
	\begin{split}
		\bp_w \wt{K}_{N}^c(z,w)
		&=e^{ -N z \bar{w} }  \frac{1}{ \tfrac{N+c}{N}h_{N-1}-h_{N}  }	\bp_w \overline{ \psi_{N}(w) }    \Big( \psi_{N}(z)-z \psi_{N-1}(z) \Big)
		\\
		&\quad -e^{ -N z \bar{w} } \frac{P_{N+1}(a)}{P_N(a)} \frac{N\,h_N/h_{N-1} }{ \tfrac{N+c+1}{N} h_N-h_{N+1}   } \overline{ \psi_{N-1}(w) }  \Big(  \psi_{N+1}(z)-z \psi_N(z)  \Big).
	\end{split}
\end{align}
\end{prop}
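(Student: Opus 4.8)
The plan is to reduce \eqref{CDI} to an algebraic identity about the monic orthogonal polynomials $P_j$ of \eqref{Pj monic} and their norms $h_j$, extracted from an integration‑by‑parts ``string equation''. Write $\langle f\,|\,g\rangle:=\int_\C f(z)\,\overline{g(z)}\,e^{-N|z|^2}\,dA(z)$ for the Gaussian inner product; with $\psi_j,\phi_j$ as in \eqref{W psi phi} one has $\langle\psi_j\,|\,\psi_k\rangle=h_j\,\delta_{jk}$, $\langle\psi_j\,|\,\phi_k\rangle=\delta_{jk}$, and
\[
\wt K_{N}^c(z,w)=e^{-N z\bar w}\,S_N(z,w),\qquad S_N(z,w):=\sum_{j=0}^{N-1}\psi_j(z)\,\overline{\phi_j(w)},
\]
where $S_N$ is the reproducing kernel of $H_N:=(z-a)^c\,\C_{<N}[z]\subset L^2(e^{-N|z|^2}\,dA)$. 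Since $\bp_w e^{-N z\bar w}=-N z\,e^{-N z\bar w}$ and $\bp_w\overline{\phi_j(w)}=\overline{\phi_j'(w)}$,
\[
\bp_w\wt K_{N}^c(z,w)=e^{-N z\bar w}\sum_{j=0}^{N-1}\Big(\overline{\phi_j'(w)}-N z\,\overline{\phi_j(w)}\Big)\psi_j(z),
\]
so the whole content of \eqref{CDI} is a telescoping of this sum; the prefactor $e^{-Nz\bar w}$ serves only to supply the multiplication‑by‑$z$ term.

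The key input is a structure relation for the $P_j$. Starting from $\int_\C\partial_z\big(P_j(z)\,\overline{P_k(z)}\,|z-a|^{2c}e^{-N|z|^2}\big)\,dA(z)=0$, using $\partial_z|z-a|^{2c}=c(z-a)^{-1}|z-a|^{2c}$ and $\partial_z e^{-N|z|^2}=-N\bar z\,e^{-N|z|^2}$, splitting $\tfrac{P_j(z)}{z-a}=\tfrac{P_j(z)-P_j(a)}{z-a}+\tfrac{P_j(a)}{z-a}$ and dropping the terms of degree $<j$ by orthogonality, one arrives at a shift‑plus‑rank‑one recursion
\[
N\,z\,P_j(z)=N\,P_{j+1}(z)+c\,\mu_j\sum_{k=0}^{j}\frac{\overline{P_k(a)}}{h_k}\,P_k(z),\qquad
\mu_j:=\int_\C P_j(z)\,\overline{(z-a)^{-1}}\,|z-a|^{2c}e^{-N|z|^2}\,dA(z),
\]
equivalently $\psi_{j+1}(z)-z\,\psi_j(z)=-\tfrac{c\mu_j}{N}\sum_{k\le j}\psi_k(z)\,\overline{P_k(a)}/h_k$. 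A direct computation gives $\langle z\psi_j\,|\,\phi_0\rangle=\tfrac{c}{Nh_0}\,\mu_j$, so the first hypothesis in \eqref{assumption_CDI} is exactly $\mu_j\neq0$, i.e.\ non‑degeneracy of this recursion, and the second is the non‑vanishing of $P_j(a)$, needed to form the ratio $P_{N+1}(a)/P_N(a)$ in \eqref{CDI}. Differentiating the recursion then produces a companion shift‑plus‑rank‑one relation for the derivatives $\phi_j'$, with coefficients involving $z$ and the ratios $h_j/h_{j+1}$ together with the same anomalous $P_j(a)$‑terms.

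Next I would substitute both relations into the displayed sum: the recursion rewrites the factor $-Nz\,\psi_j(z)$ and the companion relation rewrites $\overline{\phi_j'(w)}$. The polynomial parts telescope by an index shift, exactly as in the classical Christoffel–Darboux argument, while the anomalous contributions reorganize; invoking orthogonality and $\langle\psi_j\,|\,\phi_k\rangle=\delta_{jk}$ annihilates all but a bounded number of ``boundary'' terms, leaving $\psi_{N-1},\psi_N,\psi_{N+1}$ in the $z$–variable paired against $\bp_w\overline{\psi_N(w)}$ and $\overline{\psi_{N-1}(w)}$ in the $w$–variable, with coefficients built from $h_{N-1},h_N,h_{N+1}$ and $P_N(a),P_{N+1}(a)$. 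The defect polynomials $\psi_N-z\psi_{N-1}$ and $\psi_{N+1}-z\psi_N$ of \eqref{CDI} are precisely the truncated anomalous kernels $\sum_{k\le j}\psi_k(z)\,\overline{P_k(a)}/h_k$ delivered by the recursion, and the normalising denominators $\tfrac{N+c}{N}h_{N-1}-h_N$ and $\tfrac{N+c+1}{N}h_N-h_{N+1}$ emerge from pairing the leading coefficients of the recursions ($N$ from $zP_j\mapsto P_{j+1}$, and the $c$‑shifted ones coming from the derivative acting on the degree‑$(N-1)$ and degree‑$N$ factors) against the corresponding norms. Collecting everything gives \eqref{CDI}.

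The hard part will be this last bookkeeping: one must track the anomalous $P_j(a)$‑terms generated by the algebraic singularity of the weight at $z=a$ and verify that, after orthogonality, the sum collapses to \emph{exactly} the three degrees $N-1,N,N+1$ and reassembles into the compact ratio form \eqref{CDI}, rather than leaving a longer tail. A secondary technical point is justifying the integration‑by‑parts identity uniformly for $c\in(-1,\infty)$: for $c>-\tfrac12$ the boundary term on a small circle about $a$ vanishes and $\mu_j$ converges absolutely, whereas for $-1<c\le-\tfrac12$ the $a$‑terms should instead be read off from that small‑circle contour integral, or obtained by analytic continuation in $c$ from the range $c>0$.
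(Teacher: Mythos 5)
The paper does not prove Proposition~\ref{Prop_CDI}; it quotes it from \cite[Theorem 3.2]{byun2021lemniscate}, so your proposal has to be judged on its internal logic. The framework you set up is sound: the $\bar\partial_z$ integration by parts with the divided-difference splitting does yield the shift-plus-rank-one relation $z\psi_j = \psi_{j+1} + \tfrac{c\mu_j}{N}\sum_{k\le j}\overline{P_k(a)}\phi_k$; the computation $\langle z\psi_j\mid\phi_0\rangle = \tfrac{c\mu_j}{Nh_0}$ is exactly right and gives the first hypothesis in \eqref{assumption_CDI} the meaning $\mu_j\neq 0$; the defect polynomials $\psi_N-z\psi_{N-1}$ and $\psi_{N+1}-z\psi_N$ really are the truncated anomalous kernels $\sum_{k\le j}\overline{P_k(a)}\phi_k$ up to the scalars $-\tfrac{c\mu_{N-1}}{N}$ and $-\tfrac{c\mu_N}{N}$; and your remark about the boundary term at $a$ for $-1<c\le-\tfrac12$ is a genuine technical point with a reasonable workaround.

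The gap is that you stop exactly where the proposition starts to say something. Two pieces are missing. First, the ``companion shift-plus-rank-one relation for $\phi_j'$'' is not what differentiating the recursion actually gives: one gets $\psi_{j+1}'-z\psi_j' = \psi_j - \tfrac{c\mu_j}{N}\sum_{k\le j}\overline{P_k(a)}\phi_k'$, which couples the $\phi_k'$-terms with the \emph{undifferentiated} $\psi_j$, so the $\overline{\phi_j'(w)}$-part and the $Nz\,\overline{\phi_j(w)}$-part of $\bp_w\wt K_N^c$ do not telescope independently as in the classical CD argument. Second, the specific constants in \eqref{CDI} require further identities your outline never produces: applying the $\partial_z$ integration by parts at the boundary degrees $(j,k)=(N,N-1)$ and $(N+1,N)$ and matching leading coefficients gives $cP_N(a)\overline{\mu_{N-1}} = Nh_N-(N+c)h_{N-1}$ and $cP_{N+1}(a)\overline{\mu_N} = Nh_{N+1}-(N+c+1)h_N$, and only after deriving these can one recognise $\big(\tfrac{N+c}{N}h_{N-1}-h_N\big)^{-1}$ and $\tfrac{P_{N+1}(a)}{P_N(a)}\tfrac{Nh_N/h_{N-1}}{\tfrac{N+c+1}{N}h_N-h_{N+1}}$ as the reciprocals of the $\mu$-prefactors multiplying the defect polynomials. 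You explicitly flag the collapse to degrees $N-1,N,N+1$ as ``the hard part,'' and I agree --- but that collapse, carried through with the two identities above, \emph{is} the proof; what you have is a correct plan with the core computation deferred.
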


This formula plays a key role in performing asymptotic analysis for Theorems~\ref{Thm_induced Ginibre} and ~\ref{Thm_Boundary iGinibre}.
We also refer to \cite{lee2016fine,akemann2021scaling,MR1917675,byun2021universal,byun2022almost,byun2022spherical} for various Christoffel-Darboux type identities involving certain differential operators.

\subsection{Fine asymptotic behaviours of orthogonal polynomials and norms} \label{Subsec_fine asymp}

Recall that the monic polynomial $P_j$ satisfies the orthogonality condition \eqref{Pj monic}. 
The weighted orthogonal polynomial $\psi_j$ is given by \eqref{W psi phi}.
We obtain the strong asymptotic behaviour of $\psi_j$ up to the first subleading terms.

\begin{prop} \label{Prop_psin diff}
Let $a>1$ and $c>-1$. Then for $z \in \C$ outside $\mathcal{S}_a$ in \eqref{Sa a>1}, we have
\begin{align}
 \psi_{N-1}(z)&= z^{N+c-1} \Big( \frac{z-a}{z- \frac{1}{a} } \Big)^c \cdot \Big[ 1-
\frac{c}{1-az}\Big(  \frac{1+c}{2}\frac{1}{1-az}+\frac{c}{1-a^2}-1 \Big)
\frac{1}{N}+O(\frac{1}{N^2}) \Big] , \label{psi n-1}
\\
\psi_N(z)&=z^{N+c} \Big( \frac{z-a}{z- \frac{1}{a} } \Big)^c \cdot \Big[ 1-\frac{c}{1-az}\Big(  \frac{1+c}{2}\frac{1}{1-az}+\frac{c}{1-a^2}  \Big)
\frac{1}{N}+O(\frac{1}{N^2}) \Big] ,  \label{psi n}
\\
\psi_{N+1}(z)  &= z^{N+c+1} \Big( \frac{ z-a }{ z-\frac{1}{a} } \Big)^c \cdot \Big[ 1-
\frac{c}{1-az}\Big(  \frac{1+c}{2}\frac{1}{1-az}+\frac{c}{1-a^2}+1 \Big)
\frac{1}{N}+O(\frac{1}{N^2}) \Big]. \label{psi n+1}
\end{align}
In particular, we have
\begin{align} 
\psi_N(z)-z \psi_{N-1}(z)&=z^{N+c} \Big( \frac{z-a}{z- \frac{1}{a} } \Big)^c \cdot  \frac{c}{az-1}   \frac{1}{N} \cdot \Big(1+O(\frac{1}{N}) \Big), \label{psi n n-1}
\\
\psi_{N+1}(z)-z \psi_{N}(z)&=z^{N+1+c} \Big( \frac{z-a}{z- \frac{1}{a} } \Big)^c \cdot   \frac{c}{az-1}   \frac{1}{N} \cdot \Big(1+O(\frac{1}{N}) \Big).  \label{psi n+1 n}
\end{align}
\end{prop}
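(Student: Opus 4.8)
The plan is to derive \eqref{psi n-1}--\eqref{psi n+1} from the Riemann--Hilbert analysis of the monic orthogonal polynomials $P_j$ defined by \eqref{Pj monic}, and then to extract the two cancellation identities \eqref{psi n n-1}, \eqref{psi n+1 n} as a corollary. Since $\psi_j(z) = (z-a)^c P_j(z)$ and the factor $(z-a)^c$ is $j$-independent, everything reduces to the strong asymptotics of $P_{N-1}, P_N, P_{N+1}$ outside the limiting skeleton $\SS_a$. The starting point is the $2\times 2$ matrix RH problem for planar orthogonal polynomials with the weight $|z-a|^{2c}e^{-N|z|^2}$, which (after the usual reduction turning the $\bar\partial$-problem into an analytic one away from $\SS_a$, as in \cite{MR3280250,MR3670735}) admits a nonlinear steepest descent treatment: introduce the $g$-function / equilibrium energy, open lenses around $\SS_a$, solve the global parametrix in terms of the conformal map from $\C\setminus\SS_a$ onto the exterior of a disk, and build a local parametrix near the endpoints $0$ and $1/a$ of $\SS_a$ (Airy-type or Bessel-type, depending on the local behaviour; the endpoint $1/a$ is a soft/hard-type edge governed by the exponent $c$). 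The final small-norm RH problem for the error matrix $R$ has jump $I + O(1/N)$, and crucially one needs the \emph{full} $O(1/N)$ term in $R$, not just $R = I + O(1/N)$, since \eqref{psi n n-1} and \eqref{psi n+1 n} exhibit a first-order cancellation: the leading terms of $\psi_N$ and $z\psi_{N-1}$ agree, and the $O(1/N)$ discrepancy is exactly $z^{N+c}\big(\tfrac{z-a}{z-1/a}\big)^c \tfrac{c}{az-1}\tfrac1N$.

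Concretely, I would carry out the following steps. First, recall from \cite{MR3670735} (or rederive) the leading-order strong asymptotics of $P_N$ outside $\SS_a$, which already gives the leading factor $z^{N+c}\big(\tfrac{z-a}{z-1/a}\big)^c$ in \eqref{psi n} after multiplying by $W(z)=(z-a)^c$ and reading off the prefactor from the global parametrix (the Szeg\H{o} function for the weight contributes $\big(\tfrac{z-a}{z-1/a}\big)^c$-type factors, and the $g$-function contributes $z^N$; the branch cut $[0,1/a]$ is precisely where the conformal map / Szeg\H{o} function has its cut). Second, compute the $1/N$-correction term in $R(z) = I + R_1(z)/N + O(1/N^2)$ by the standard residue computation: $R_1$ is determined by the $O(1/N)$ mismatch between the global parametrix and the local parametrices on the small circles around $0$ and $1/a$, via a Cauchy integral of the jump of $R$; for this Ginibre-type weight with a power singularity, these corrections are explicitly computable rational-in-$z$ expressions. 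Third, track the $j$-dependence: the polynomials of degree $N-1, N, N+1$ correspond to the same RH problem with $N$ replaced by $N-1, N, N+1$ in the exponent (or equivalently a shift in the $g$-function), so their asymptotics differ by explicit factors of $z^{\pm 1}$ at leading order and by shifted $1/N$-corrections — this is what produces the $-1, 0, +1$ inside the brackets of \eqref{psi n-1}--\eqref{psi n+1}. I would set up the bookkeeping so that $\psi_{N+j}(z) = z^{N+c+j}\big(\tfrac{z-a}{z-1/a}\big)^c\big[1 - \tfrac{c}{1-az}\big(\tfrac{1+c}{2}\tfrac{1}{1-az} + \tfrac{c}{1-a^2} + j\big)\tfrac1N + O(\tfrac1{N^2})\big]$ uniformly in $j\in\{-1,0,1\}$, matching the $O(1/N^2)$ error carefully.

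For the last two identities, the point is purely algebraic once the expansions are in hand: writing $\psi_N(z) - z\psi_{N-1}(z) = z^{N+c}\big(\tfrac{z-a}{z-1/a}\big)^c\big[ (1 - A_N/N) - (1 - A_{N-1}/N) + O(1/N^2)\big]$ where $A_j = \tfrac{c}{1-az}\big(\tfrac{1+c}{2}\tfrac1{1-az} + \tfrac{c}{1-a^2} + (j-N)\big)$, the first two terms cancel and the difference of the brackets is $(A_{N-1} - A_N)/N + O(1/N^2) = -\tfrac{c}{1-az}\cdot(-1)\cdot\tfrac1N + O(1/N^2) = \tfrac{c}{1-az}\tfrac1N + O(1/N^2) = -\tfrac{c}{az-1}\tfrac1N + O(1/N^2)$; a sign check against the stated $\tfrac{c}{az-1}$ tells me which orientation of the difference to take, and \eqref{psi n+1 n} follows identically with an extra factor $z$. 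I should double-check that the $O(1/N^2)$ errors in \eqref{psi n-1}--\eqref{psi n+1} are genuinely uniform on compact subsets of $\C\setminus\SS_a$ so that subtracting does not degrade the estimate.

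The main obstacle I anticipate is \textbf{extracting the correct $1/N$-correction coefficient}, i.e. proving that the subleading term is exactly $-\tfrac{c}{1-az}\big(\tfrac{1+c}{2}\tfrac1{1-az} + \tfrac{c}{1-a^2} + j\big)$ and not merely $O(1/N)$. This requires the $O(1/N)$ term of the small-norm RH problem's solution, which in turn demands an honest construction of the local parametrices at \emph{both} endpoints of $\SS_a$ (including the one carrying the $|z-a|^{2c}$ singularity at $z=a$, which sits off $\SS_a$ but interacts with the Szeg\H{o} function and must be handled so the $(z-a)^c$ factor comes out clean) and a careful matching computation; the interplay between the power-weight exponent $c$ and the geometry of $\SS_a$ (whose shape changes qualitatively as $a$ decreases toward $1$, cf. Figure~\ref{Figure_Sd1}) is where the bulk of the technical work lies. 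Everything downstream — the Christoffel--Darboux formula \eqref{CDI}, the norm asymptotics in Lemma~\ref{Lem_hn diff}, and ultimately Theorems~\ref{Thm_induced Ginibre} and \ref{Thm_Boundary iGinibre} — then assembles these inputs, so getting this correction term exactly right is the crux.
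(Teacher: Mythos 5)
Your overall strategy agrees with the paper's: realise $P_n$ as the $(1,1)$-entry of the contour Riemann--Hilbert problem of \cite{MR3280250,MR3670735}, run nonlinear steepest descent with a $g$-function and global parametrix giving the Szeg\H{o}-type factor $\big(\tfrac{z}{z-\beta}\big)^c$, extract the explicit $O(1/N)$ correction, and then observe that \eqref{psi n n-1}--\eqref{psi n+1 n} are a purely algebraic cancellation. However, two of your concrete technical choices diverge from what the paper actually needs and would cause trouble if carried out as written.

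First, the local parametrix. You describe Airy-type or Bessel-type parametrices at the two ``endpoints $0$ and $1/a$'' and worry about the singularity at $z=a$. In the paper's construction there is exactly one local parametrix, in a disk $D_\beta$ around $\beta = 1/a$, and it is built out of parabolic cylinder functions $D_{-c}$, not Airy or Bessel functions (see the matrix $\mathcal{W}(\zeta)$ in the proof, which is assembled from $D_{-c}$, $D_{-1-c}$, $D_{c}$, $D_{-1+c}$). This is the only point where the jump of the global parametrix $\Phi$ fails to match the jump of $Z$. The point $z=a$ carries the power singularity of the weight but is not on $\SS_a$ and is already absorbed into the contour-integral formulation (the weight $w_{n,N}(z)=\big(\tfrac{z-a}{z}\big)^c e^{-Naz}/z^n$ is analytic on $\C\setminus[0,a]$), so no local parametrix is needed there; nor is one needed at $0$. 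Moreover the paper does not merely compute a single $R_1$ by residues: it iterates the partial Schlesinger transform of Bertola--Lee, building a whole sequence $R_1, R_2, \dots$ so that the error is $O(N^{-\infty})$ and the $1/N$-coefficient of $[R(z)]_{11}$ (which carries both a single- and a double-pole term at $\beta$) is identified exactly.

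Second, and more importantly, the passage from $P_N$ to $P_{N\pm1}$. You propose to ``rerun the same RH problem with $N$ replaced by $N\pm1$,'' but this conflates two independent parameters: the polynomial degree $n$ and the weight parameter $N$ (the jump weight is $w_{n,N}(z)=\big(\tfrac{z-a}{z}\big)^c e^{-Naz}/z^n$ and the measure in \eqref{Pj monic} is $e^{-N|z|^2}$, not $e^{-(N\pm1)|z|^2}$). The asymptotic bracket coefficients are \emph{not} obtained by a naive shift of $N$; they require tracking how both the degree and the conformal-map data change. The paper instead invokes the exact algebraic scaling identity
\begin{equation}
P_{n,N}(z;a)=\Big(\frac{n}{N}\Big)^{\frac{n}{2}}P_{n,n}\Big(\sqrt{\tfrac{N}{n}}\,z,\ \sqrt{\tfrac{N}{n}}\,a\Big),
\end{equation}
which reduces $P_{N-1}$ and $P_{N+1}$ to the already-analysed diagonal case with rescaled $z$ and $a$; expanding the resulting rational-in-$(z-\beta)$ correction factor in $1/N$ is precisely what produces the $-1,0,+1$ inside the brackets of \eqref{psi n-1}--\eqref{psi n+1}. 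Without this (or an equivalent) identity you would need to re-do the entire Schlesinger construction with off-diagonal $n,N$ and verify that the output is linear in $n-N$ to the relevant order, which is a genuinely nontrivial extra step that your plan does not address. Your final algebraic cancellation step, by contrast, is correct and matches how the paper derives \eqref{psi n n-1}--\eqref{psi n+1 n} from the three expansions.
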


We emphasise that the leading terms in Proposition~\ref{Prop_psin diff} were obtained in \cite[Theorem 2]{MR3670735}.
Note that the terms \eqref{psi n n-1} and \eqref{psi n+1 n} appear in the Christoffel-Darboux formula \eqref{CDI}.
For these terms, we should extend \cite[Theorem 2]{MR3670735} up to the first subleading $O(1/N)$ terms. 

Notice that if $c=0$, then $\psi_j(z)=z^{j}.$
Thus in this case Proposition~\ref{Prop_psin diff} trivially holds.

To apply the Christoffel-Darboux formula \eqref{CDI}, one should also derive the asymptotic behaviours of the orthogonal norms $h_j$ in \eqref{Pj monic}.

\begin{lem} \label{Lem_hn diff}
Let $a>1$ and $c>-1$. Then we have 
\begin{align}\label{hN-1 c gen}
h_{N-1}&= e^{-N} \sqrt{ \frac{2\pi}{N} } a^{2c} \cdot \Big[ 1+ \Big( \frac{c}{a^2-1}+\frac{c(c-1)}{2}+\frac{1}{12}\Big)\frac{1}{N}+O(\frac{1}{N^2}) \Big] ,
\\\label{hN c gen}
h_N &= e^{-N} \sqrt{ \frac{2\pi}{N} } a^{2c} \cdot \Big[ 1+ \Big( \frac{c}{a^2-1}+\frac{c(c-1)}{2}+\frac{1}{12}\Big)\frac{1}{N}+O(\frac{1}{N^2}) \Big] ,
\\\label{hN+1 c gen}
h_{N+1} &= e^{-N} \sqrt{ \frac{2\pi}{N} } a^{2c} \cdot \Big[ 1+ \Big( \frac{c}{a^2-1}+\frac{c(c-1)}{2}+\frac{13}{12}\Big)\frac{1}{N}+O(\frac{1}{N^2}) \Big].
\end{align}
In particular, for $c \not =0$, we have
\begin{align}\label{eq hNN-1}
\frac{1}{ \tfrac{N+c}{N}h_{N-1}-h_{N}  }& = \frac{1}{c} \frac{1}{a^{2c} }\frac{1}{\sqrt{2\pi}}\, N^{\frac32}\, e^N \cdot  \Big( 1+O(\frac{1}{N}) \Big), 
 \\
 \label{eq hNN-1N+1}
 \frac{N\,h_N/h_{N-1} }{ \tfrac{N+c+1}{N} h_N-h_{N+1}   } &=  \frac{1}{c}  \frac{1}{a^{2c} }\frac{1}{\sqrt{2\pi}}\, N^{\frac52}\, e^N \cdot  \Big( 1+O(\frac{1}{N}) \Big). 
\end{align}
\end{lem}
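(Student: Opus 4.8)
The plan is to compute the orthogonal norms $h_j$ for $j \in \{N-1,N,N+1\}$ directly from their integral representation, using the rotational structure of the weight $|z-a|^{2c} e^{-N|z|^2}$ only after expressing $h_j$ as a one-dimensional integral, and then apply Laplace's method up to second order. First I would recall that, writing $P_j(z) = z^j + (\text{lower order})$, the norm $h_j$ admits the Heine-type / Gram determinant formula $h_j = D_{j+1}/D_j$ where $D_j = \det[\mu_{k,l}]_{k,l=0}^{j-1}$ and $\mu_{k,l} = \int_\C z^k \bar z^l |z-a|^{2c} e^{-N|z|^2}\,dA(z)$. However, a cleaner route, already used in \cite{MR3670735,byun2021lemniscate}, is to note that the steepest-descent analysis of Section~\ref{Section_fine asymptotics} produces $P_j$ and $h_j$ simultaneously: the Riemann--Hilbert problem for the $2\times 2$ matrix $Y$ has the $(2,1)$ and $(2,2)$ entries encoding $-2\pi i\, h_{j-1}^{-1}$-type constants, so the same local parametrix construction that yields Proposition~\ref{Prop_psin diff} yields $h_j$ as a byproduct. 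I would therefore read off $h_j$ from the large-$j$ expansion of the RHP solution, being careful to retain the $O(1/N)$ correction coming from the first subleading term in the global parametrix and from the error matrix $R = I + O(1/N)$.

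Concretely, the key steps in order: (i) Express $h_j$ via the RHP, $h_{j-1} = -\frac{1}{2\pi i}\lim_{z\to\infty} z^{-j+1}(Y_1)_{21}$ (or the analogous planar-OP version with $\bp$-corrections as in \cite{MR3280250,MR3670735}), where $Y_1$ is the subleading coefficient in $Y(z) = (I + Y_1/z + \cdots)\,\mathrm{diag}(z^j, z^{-j})$. (ii) Substitute the global parametrix plus the first correction term and the $R$-matrix expansion; the leading behaviour gives the classical $h_j \sim e^{-j} \sqrt{2\pi/j}\, a^{2c}$ (for the relevant regime $j \approx N$, $a>1$), matching the known Ginibre-type asymptotics when $c=0$. (iii) Track the $1/N$ term: it receives contributions from (a) the $c$-dependent factor $(z-a)^c/(z-1/a)^c$ appearing in the outer parametrix, whose expansion at $\infty$ and at the relevant stationary point contributes the $\frac{c}{a^2-1}$ and $\frac{c(c-1)}{2}$ pieces, and (b) the universal $\frac{1}{12}$ (resp. $\frac{13}{12}$ for $h_{N+1}$) coming from the next-order Laplace/Airy-free saddle correction — the same $1/(12n)$ that appears in Stirling's formula, which is why $h_{N+1}$ differs from $h_N$ by exactly $+1/N$ in the bracket (shifting $j \mapsto j+1$ changes $e^{-j}\sqrt{2\pi/j}$ and the saddle location). (iv) Finally, for $c \neq 0$, form the combinations in \eqref{eq hNN-1}--\eqref{eq hNN-1N+1}: here the leading $O(1)$ parts of $\frac{N+c}{N}h_{N-1}$ and $h_N$ cancel because \eqref{hN-1 c gen} and \eqref{hN c gen} have identical brackets, so $\frac{N+c}{N}h_{N-1} - h_N = \frac{c}{N}\cdot e^{-N}\sqrt{2\pi/N}\,a^{2c}(1+O(1/N))$, and inverting gives the stated $N^{3/2}e^N$ growth; the second combination is analogous after using $h_N/h_{N-1} = 1 + O(1/N^2)$ and $\frac{N+c+1}{N}h_N - h_{N+1} = \frac{c}{N}e^{-N}\sqrt{2\pi/N}\,a^{2c}(1+O(1/N))$, which picks up the extra factor $N$.

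I expect the main obstacle to be step (iii): isolating the $1/N$ coefficient requires keeping the subleading term of the global parametrix \emph{and} the subleading term of the error matrix $R$ \emph{and} the next term in the asymptotic expansion of whatever special-function local parametrix is used near $z = 1/a$ (the endpoint of $\mathcal{S}_a$), then checking that the spurious contributions cancel and only $\frac{c}{a^2-1} + \frac{c(c-1)}{2} + \frac{1}{12}$ survives. A useful consistency check, which I would invoke to reduce the computational burden, is the exactly solvable case $c=1$ (treated in Appendix~\ref{Subsec_asymptotic c1}): there $P_j$ and $h_j$ are expressible through incomplete gamma functions, so the bracket $\frac{1}{a^2-1} + \frac{1}{12}$ at $c=1$ can be verified by hand, pinning down the universal $\frac{1}{12}$; the general-$c$ polynomial dependence $\frac{c(c-1)}{2}$ in the exponent is then forced by the structure of the parametrix (it is the $c$-th power of a factor whose log-expansion is linear in $c$, squared in the norm). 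The equality of the brackets for $h_{N-1}$ and $h_N$ — rather than a shift by $1/N$ — is the one slightly delicate point, and I would double-check it against the $c=1$ formula, since it is exactly this coincidence that makes the denominators in \eqref{eq hNN-1}--\eqref{eq hNN-1N+1} of order $1/N$ rather than $O(1)$, which is what ultimately produces the $\sqrt{N}$ in the correlation kernel.
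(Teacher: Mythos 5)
Your overall strategy — read $h_j$ off the Riemann--Hilbert data, track $1/N$ corrections, exploit the cancellation of the leading brackets to form \eqref{eq hNN-1}--\eqref{eq hNN-1N+1}, and cross-check against the exactly solvable $c=1$ case — is the same as the paper's, and your observation that the \emph{equality} (rather than a $1/N$-shift) of the brackets in \eqref{hN-1 c gen} and \eqref{hN c gen} is the delicate point that makes the denominators of order $1/N$ is exactly right and is the crux of the lemma. The computation in your step (iv) is also correct.

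There is, however, one concrete gap you should close. The formula you quote, $h_{j-1}=-\frac{1}{2\pi i}\lim_{z\to\infty}z^{-j+1}[Y(z)]_{21}$, is the \emph{one-dimensional} identity and it produces what the paper calls the contour norm $\widetilde h_{j-1}=\int_\Gamma P_{j-1}^2\,w_{j-1,N}\,dz$, not the planar norm $h_{j-1}$ defined by the two-dimensional orthogonality \eqref{Pj monic}. You gesture at ``the analogous planar-OP version with $\bar\partial$-corrections,'' which is the right instinct, but the actual bridge is \cite[Proposition 7.1]{MR3280250}: it expresses $h_n$ as a ratio involving $\widetilde h_n$ \emph{and} the value $P_{n+1,N}(0)$ of the next polynomial at an interior point of the droplet. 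So besides the behaviour of $Y$ at infinity, the proof genuinely needs the \emph{interior} asymptotics of $P_N$ on ${\rm Int}\,\mathcal S_a\setminus U$ (where the RHP solution is dominated by the other column of $\Phi$ and $P_N$ decays like $N^{c-1/2}a^N e^{Na(z-\beta)}$, cf.\ \eqref{eq opN}). Your step (ii) only invokes the exterior parametrix at $\infty$, which is insufficient. Relatedly, in the paper $R$ is the partial Schlesinger (rational) transform of the global parametrix, not the error matrix; the actual error $Z\,(Z^\infty)^{-1}$ is $I+O(N^{-\infty})$, so all $1/N$ terms come from $R$, the parametrices, the scaling identities $P_{n,N}(z;a)=(n/N)^{n/2}P_{n,n}(\sqrt{N/n}\,z,\sqrt{N/n}\,a)$ (and the analogous one for $\widetilde h_n$), and Stirling's formula — not from a ``next-order Laplace correction'' at a saddle. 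Once the $h_n$ vs.\ $\widetilde h_n$ distinction and the interior evaluation at $z=0$ are in place, the rest of your outline goes through, and the $c=1$ appendix does indeed pin down the $\tfrac1{12}$ and $\tfrac{13}{12}$ constants.
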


Note that \eqref{eq hNN-1} and \eqref{eq hNN-1N+1} appear in the Christoffel-Darboux formula \eqref{CDI}.
We remark that for $c=0$, we have 
$$
h_j=\int_\C |z|^{2j} e^{-N |z|^2}\,dA(z)=2\int_0^\infty r^{2j+1} e^{-Nr^2}\,dr= \frac{j!}{N^{j+1}}.
$$
Thus by Stirling's formula, one can directly check that Lemma~\ref{Lem_hn diff} holds for $c=0.$

\subsection{Proofs of Theorems~\ref{Thm_induced Ginibre} and ~\ref{Thm_Boundary iGinibre} } \label{Subsec_proof thms induced}

Combining the Christoffel-Darboux formula (Proposition~\ref{Prop_CDI}) with Proposition~\ref{Prop_psin diff} and Lemma~\ref{Lem_hn diff}, we show Theorem~\ref{Thm_induced Ginibre}.

\begin{proof}[Proof of Theorem~\ref{Thm_induced Ginibre}]
By the transform \eqref{KN KN til}, it suffices to show that for $z$ and $\bar{w}$ outside $\SS_a$ in \eqref{Sa a>1}, 
\begin{equation} \label{wt K asymp}
 \wt{K}_{N}^c(z,w)=  \sqrt{ \frac{N}{2\pi} } \frac{1}{z\bar{w}-1}\,\Big( \frac{z-a}{1-az}  \frac{ \bar{w} -a}{ 1-a\bar{w} } \Big)^c \cdot (z\bar{w})^{N+c}   e^{ N-N z \bar{w} } \cdot  \Big(1+O(\frac{1}{N})\Big).
\end{equation}
Using Proposition~\ref{Prop_psin diff}, we have
\begin{equation}
\bp_w \overline{ \psi_{N}(w) } = N \bar{w}^{N+c-1} \Big( \frac{ \bar{w}-a }{ \bar{w}-\frac{1}{a} } \Big)^c \cdot \Big(1+O(\frac{1}{N}) \Big), \qquad 
 \overline{ \psi_{N-1}(w) }  = \bar{w}^{N+c-1} \Big( \frac{ \bar{w}-a }{ \bar{w}-\frac{1}{a} } \Big)^c \cdot \Big(1+O(\frac{1}{N}) \Big). 
\end{equation}
By \cite[Theorem 2]{MR3670735}, we also have 
\begin{equation}
\frac{P_{N+1}(a)}{P_N(a)}= a\cdot \Big( 1+O(\frac{1}{N}) \Big).
\end{equation}
Therefore by \eqref{psi n n-1} and \eqref{psi n+1 n}, we obtain
\begin{equation}
\bp_w \overline{ \psi_{N}(w) }    \Big( \psi_{N}(z)-z \psi_{N-1}(z) \Big)=  \frac{c}{az-1}    z^{N+c}   \bar{w}^{N+c-1} \Big( \frac{z-a}{z- \frac{1}{a} } \Big)^c  \Big( \frac{ \bar{w}-a }{ \bar{w}-\frac{1}{a} } \Big)^c \cdot \Big(1+O(\frac{1}{N}) \Big) 
\end{equation}
and 
\begin{equation}
\begin{split}
\overline{ \psi_{N-1}(w) }  \Big(  \psi_{N+1}(z)-z \psi_N(z)  \Big) 
=  \frac{c}{az-1}  \frac{1}{N} z^{N+1+c}  \bar{w}^{N+c-1} \Big( \frac{z-a}{z- \frac{1}{a} } \Big)^c \Big( \frac{ \bar{w}-a }{ \bar{w}-\frac{1}{a} } \Big)^c \cdot \Big(1+O(\frac{1}{N}) \Big).
\end{split}
\end{equation}
Then by Lemma~\ref{Lem_hn diff}, we have
\begin{equation}
\begin{split}
&\quad  \frac{1}{ \tfrac{N+c}{N}h_{N-1}-h_{N}  }	\bp_w \overline{ \psi_{N}(w) }    \Big( \psi_{N}(z)-z \psi_{N-1}(z) \Big)
\\
&=  \frac{1}{c} \frac{1}{a^{2c} }\frac{1}{\sqrt{2\pi}}\, N^{\frac32}\, e^N         \frac{c}{az-1} z^{N+c}   \bar{w}^{N+c-1} \Big( \frac{z-a}{z- \frac{1}{a} } \Big)^c  \Big( \frac{ \bar{w}-a }{ \bar{w}-\frac{1}{a} } \Big)^c \cdot \Big(1+O(\frac{1}{N}) \Big) 
\\
&=  \frac{1}{\sqrt{2\pi}}\, N^{\frac32}\, e^N     \frac{1}{az-1}  z^{N+c}   \bar{w}^{N+c-1}  \Big( \frac{z-a}{1-az}  \frac{ \bar{w} -a}{ 1-a\bar{w} } \Big)^c \cdot \Big(1+O(\frac{1}{N}) \Big) 
\end{split}
\end{equation}
and
\begin{equation}
\begin{split}
&\quad \frac{P_{N+1}(a)}{P_N(a)} \frac{N\,h_N/h_{N-1} }{ \tfrac{N+c+1}{N} h_N-h_{N+1}   } \overline{ \psi_{N-1}(w) }  \Big(  \psi_{N+1}(z)-z \psi_N(z)  \Big)
\\
&= a\,  \frac{1}{c}  \frac{1}{a^{2c} }\frac{1}{\sqrt{2\pi}}\, N^{\frac52}\, e^N       \frac{c}{az-1}  \frac{1}{N} z^{N+1+c}  \bar{w}^{N+c-1} \Big( \frac{z-a}{z- \frac{1}{a} } \Big)^c \Big( \frac{ \bar{w}-a }{ \bar{w}-\frac{1}{a} } \Big)^c \cdot \Big(1+O(\frac{1}{N}) \Big)
\\
&=    \frac{1}{\sqrt{2\pi}}\, N^{\frac32}\, e^N   \frac{az}{az-1}   z^{N+c}  \bar{w}^{N+c-1} \Big( \frac{z-a}{1-az}  \frac{ \bar{w} -a}{ 1-a\bar{w} } \Big)^c  \cdot \Big(1+O(\frac{1}{N}) \Big).
\end{split}
\end{equation}
Now it follows from the Christoffel-Darboux formula (Proposition~\ref{Prop_CDI}) that  
\begin{equation}
	\bp_w \wt{K}_{N}^{c}(z,w)= -\frac{N^{\frac32}}{\sqrt{2\pi}}\, \Big( \frac{z-a}{1-az}  \frac{ \bar{w} -a}{ 1-a\bar{w} } \Big)^c  \cdot z^{N+c}  \bar{w}^{N+c-1}  e^{ N-N z \bar{w} } \cdot  \Big(1+O(\frac{1}{N})\Big). 
\end{equation}
Integrating this equation, we obtain 
\begin{equation}
\wt{K}_{N}^c(z,w)=  \sqrt{ \frac{N}{2\pi} } \frac{1}{z\bar{w}-1}\,\Big( \frac{z-a}{1-az}  \frac{ \bar{w} -a}{ 1-a\bar{w} } \Big)^c \cdot (z\bar{w})^{N+c}   e^{ N-N z \bar{w} } \cdot  \Big(1+O(\frac{1}{N})\Big)+ f_N(z)
\end{equation}
for some function $f_N$ depending only on $z$. 
Due to the symmetry $\wt{K}_N^c(z,w)=\wt{K}_N^c(w,z)$, it follows that $f_N$ is a constant function. 
Furthermore, by combining the exterior estimate
$$
\wt{K}_N^c(z,z) \to 0, \qquad \mbox{as } z \to \infty  
$$
that holds in general (see \cite[Section 4.1.1]{AHM15}) and the elementary inequality
$$
\det \begin{bmatrix}
\wt{K}_N^c(z,z) & \wt{K}_N^c(z,w)
\\
\wt{K}_N^c(w,z) & \wt{K}_N^c(w,w)
\end{bmatrix} \ge 0, \qquad \mbox{i.e.} \qquad  \wt{K}_N^c(z,z) \wt{K}_N^c(w,w) \ge | \wt{K}_N^c(z,w) |^2, 
$$
one can observe that $\wt{K}_N^c(z,w) \to 0$ as $z \to \infty$. 
Thus we conclude \eqref{wt K asymp}. 
\end{proof}

\begin{proof}[Proof of Theorem~\ref{Thm_Boundary iGinibre}]
By \eqref{KN KN til}, it suffices to show that 
\begin{equation} \label{cov of wt K local}
\frac{1}{N}\wt{K}_N^c\Big(p+\frac{p\,z}{\sqrt{N}}\,,\, p+\frac{p\, w}{\sqrt{N}}\Big) \to  \frac12 \erfc\Big( \frac{z+\bar{w}}{\sqrt{2}} \Big).
\end{equation}
To lighten notations, let us write 
\begin{equation}
z_p:= p\,\Big(1+\frac{z}{\sqrt{N}}\Big), \qquad w_p:= p\,\Big(1+\frac{w}{\sqrt{N}}\Big).
\end{equation}
First note that 
\begin{equation}
e^{-N z_p \bar{w}_p}= e^{-N-\sqrt{N}(z+\bar{w}) -z\bar{w} }.
\end{equation}
We also have
\begin{equation}
z_p^{N+c}= p^{N+c}\,e^{ \sqrt{N}z -\frac{z^2}{2} }\cdot \Big( 1+O(\frac{1}{\sqrt{N}}) \Big), \qquad 
z_p^{N+1+c}= p^{N+1+c}\,e^{ \sqrt{N}z -\frac{z^2}{2} }\cdot \Big( 1+O(\frac{1}{\sqrt{N}}) \Big).
\end{equation}
Combining these asymptotics with Proposition~\ref{Prop_psin diff}, we have
\begin{align}
\psi_N(z_p)-z_p \psi_{N-1}(z_p)&=p^{N+c}\,e^{ \sqrt{N}z -\frac{z^2}{2} } \Big( \frac{p-a}{p- \frac{1}{a} } \Big)^c \cdot  \frac{c}{ap-1}   \frac{1}{N} \cdot \Big( 1+O(\frac{1}{\sqrt{N}}) \Big),
\\
\psi_{N+1}(z_p)-z_p \psi_{N}(z_p)&=p^{N+1+c}\,e^{ \sqrt{N}z -\frac{z^2}{2} } \Big( \frac{p-a}{p- \frac{1}{a} } \Big)^c \cdot   \frac{c}{ap-1}   \frac{1}{N} \cdot \Big( 1+O(\frac{1}{\sqrt{N}}) \Big)
\end{align}
and
\begin{align}
\overline{ \psi'_{N}(w_p) } & = N \, \bar{p}^{N+c-1} e^{ \sqrt{N}\bar{w}-\frac{\bar{w}^2}{2} } \Big( \frac{ \bar{p}-a }{ \bar{p}-\frac{1}{a} } \Big)^c \cdot \Big(1+O(\frac{1}{\sqrt{N}}) \Big),
\\
 \overline{ \psi_{N-1}(w_p) } & = \bar{p}^{N+c-1} e^{ \sqrt{N}\bar{w}-\frac{\bar{w}^2}{2} }  \Big( \frac{ \bar{p}-a }{ \bar{p}-\frac{1}{a} } \Big)^c \cdot \Big(1+O(\frac{1}{\sqrt{N}}) \Big). 
\end{align}
Then by Lemma~\ref{Lem_hn diff}, we obtain 
\begin{equation}
\begin{split}
&\quad e^{ -N z_p \bar{w}_p }  \frac{1}{ \tfrac{N+c}{N}h_{N-1}-h_{N}  }	 \overline{ \psi'_{N}(w_p) }    \Big( \psi_{N}(z_p)-z_p \psi_{N-1}(z_p) \Big)
\\
&= e^{ -\frac{(z+\bar{w})^2}{2} }   \frac{1}{a^{2c} }\frac{1}{\sqrt{2\pi}}\, N^{\frac32} \,   \Big( \frac{ \bar{p}-a }{ \bar{p}-\frac{1}{a} } \Big)^c \, \Big( \frac{p-a}{p- \frac{1}{a} } \Big)^c \cdot  \frac{p}{ap-1}  \cdot \Big( 1+O(\frac{1}{\sqrt{N}}) \Big)
\\
&= e^{ -\frac{(z+\bar{w})^2}{2} }  \frac{1}{\sqrt{2\pi}}\, N^{\frac32} \, \cdot  \frac{p}{ap-1}  \cdot \Big( 1+O(\frac{1}{\sqrt{N}}) \Big).
\end{split}
\end{equation}
Here we have used that
\begin{equation}
\Big| \frac{p-a}{ap- 1 } \Big|= \Big| \frac{p-a}{a- \bar{p} } \Big|=1, 
\end{equation}
which follows from $|p|=1$.

Similarly, we have 
\begin{equation}
\begin{split}
&\quad e^{ -N z_p \bar{w}_p } \frac{P_{N+1}(a)}{P_N(a)} \frac{N\,h_N/h_{N-1} }{ \tfrac{N+c+1}{N} h_N-h_{N+1}   } \overline{ \psi_{N-1}(w_p) }  \Big(  \psi_{N+1}(z_p)-z_p \psi_N(z_p)  \Big)
\\
&= e^{ -\frac{(z+\bar{w})^2}{2} }\,  \frac{1}{a^{2c} }\frac{1}{\sqrt{2\pi}}\, N^{\frac32}\,    \Big( \frac{ \bar{p}-a }{ \bar{p}-\frac{1}{a} } \Big)^c \,  \Big( \frac{p-a}{p- \frac{1}{a} } \Big)^c \cdot   \frac{ap^2}{ap-1}    \cdot \Big( 1+O(\frac{1}{\sqrt{N}}) \Big)
\\
&= e^{ -\frac{(z+\bar{w})^2}{2} }\, \frac{1}{\sqrt{2\pi}}\, N^{\frac32}\,    \cdot   \frac{ap^2}{ap-1}    \cdot \Big( 1+O(\frac{1}{\sqrt{N}}) \Big).
\end{split}
\end{equation}
Therefore by Proposition~\ref{Prop_CDI}, we obtain that 
\begin{equation}
\bp_w \Big[ \frac{1}{N}\wt{K}_N^c(z_p,w_p) \Big]=- \frac{1}{\sqrt{2\pi}} e^{-\frac{(z+\bar{w}^2)}{2} }\cdot \Big( 1+O(\frac{1}{\sqrt{N}}) \Big). 
\end{equation}
This gives that 
\begin{equation}
\frac{1}{N}\wt{K}_N^c(z_p,w_p)= \frac12 \erfc\Big( \frac{z+\bar{w}}{\sqrt{2}} \Big)\cdot  \Big( 1+O(\frac{1}{\sqrt{N}}) \Big),
\end{equation}
which leads to the desired convergence \eqref{cov of wt K local}. 
Here the integration constant is determined similarly above by the fact that $\frac{1}{N}\wt{K}_N^c(z_p,w_p) \to 0$ as $z \to +\infty$.
\end{proof}

\subsection{Proofs of Theorems~\ref{Thm_archipelago} and ~\ref{Thm_local limit} } \label{Subsec_proof theorems archipelago}

We derive Theorem~\ref{Thm_archipelago} from Theorem~\ref{Thm_induced Ginibre}.

\begin{proof}[Proof of Theorem~\ref{Thm_archipelago}]
By Theorem~\ref{Thm_induced Ginibre} and \eqref{whKN wtKN}, we have
\begin{equation}
\begin{split}
    \wh{K}_{N}^c(z,w)&=  \sqrt{ \frac{N}{2\pi} } \frac{1}{(a-z)(a-\bar{w})-1}\,\Big( \frac{z}{a^2-1-az} \frac{ \bar{w} }{ a^2-1-a\bar{w} } \Big)^c  
    \\
    &\quad \times \Big( (z-a)(\bar{w}-a)\Big)^{N+c}   e^{ N -\frac{N}{2} ( |z-a|^2+|w-a|^2 ) } \cdot \Big(1+O(\frac{1}{N})\Big). 
\end{split}
\end{equation}
By \eqref{multifold trans}, we obtain 
\begin{equation}
\begin{split} \label{KN asym v1}
	K_{dN}^c(z,w) & =d \sqrt{ \frac{N}{2\pi } } \frac{ ( (z^d-a)(\bar{w}^d-a) )^{ N }  }{(z^d-a)(\bar{w}^d-a)-1}\,|z w|^{c}     e^{ N -\frac{dN}{2} ( V(z)+V(w) ) }  \\
	&\quad \times \sum_{l=0}^{d-1} \Big( \frac{z^d-a}{az^d+1-a^2} \frac{ \bar{w}^d-a }{ a\bar{w}^d+1-a^2 } \Big)^{ \frac{c+l+1}{d}-1 }  (z\bar{w})^l  \cdot \Big(1+O(\frac{1}{N})\Big). 
\end{split}
\end{equation}
Now \eqref{KN asym v2} follows from straightforward computations. 

\end{proof}

Finally, we derive Theorem~\ref{Thm_local limit} from Theorem~\ref{Thm_Boundary iGinibre}. 

\begin{proof}[Proof of Theorem~\ref{Thm_local limit}]
By \eqref{multifold trans} and \eqref{density lemniscate}, we have
\begin{equation}
\begin{split}
&\quad \frac{1}{dN \Delta V(p)} K_{dN}^c\Big( p+ \frac{e^{i\theta}\, z}{ \sqrt{dN \Delta V(p)} } \, , \, p+ \frac{e^{i\theta}\, w}{ \sqrt{dN \Delta V(p)} }  \Big) 
\\
&=\frac{1}{d^2 N p^{2d-2} } K_{dN}^c\Big( p+ \frac{e^{i\theta}\, z}{ d\, p^{d-1}\sqrt{N} } \, , \, p+ \frac{e^{i\theta}\, w}{ d\, p^{d-1}\sqrt{N} }  \Big) 
\\
&= \frac{1}{dN } \sum_{l=0}^{d-1} \widehat{K}_N^{ \frac{c+l+1}{d}-1 }\Big(p^d+\frac{e^{i\theta}\,z }{ \sqrt{N} }\,, \, p^d+\frac{e^{i\theta}\,w }{ \sqrt{N} } \Big) \cdot \Big( 1+O(\frac{1}{\sqrt{N}}) \Big).
\end{split}
\end{equation}
Therefore Theorem~\ref{Thm_local limit} follows from Theorem~\ref{Thm_Boundary iGinibre}.
\end{proof}

\section{Riemann-Hilbert analysis and fine asymptotic behaviours} \label{Section_fine asymptotics}

In this section, we derive fine asymptotic behaviours of the orthogonal polynomials (Proposition~\ref{Prop_psin diff}) and the orthogonal norms (Lemma~\ref{Lem_hn diff}). 
Subsection~\ref{Subsec_outline RH} is devoted to the recalling the matrix-valued Riemann-Hilbert problem developed in \cite{MR3280250} and the transforms introduced in \cite{MR3670735}. 
Based on the Riemann-Hilbert analysis in Subsections~\ref{Subsec_asymptotic OPs} and ~\ref{Subsec_asymptotic norms}, we prove Proposition~\ref{Prop_psin diff} and Lemma~\ref{Lem_hn diff}.

\subsection{Outline of the Riemann-Hilbert analysis} \label{Subsec_outline RH}

Let us briefly recall the Riemann-Hilbert analysis in \cite{MR3670735,MR3280250} (see also \cite{MR3962350,lee2020strong} for its generalisation) that was developed to derive the asymptotic behaviours of the orthogonal polynomials $P_n.$ 
We also refer the reader to \cite{MR3849128,MR3668632} for similar studies in different settings. 
This will be used in the following subsection to derive fine asymptotic behaviours of $P_n$.

Let $\Gamma$ be a simple closed curve that encloses the line segment $[0,a]\in \C$ with counterclockwise orientation. Let the analytic function  $w_{n,N}$ on $\C\setminus [0,a]$ be defined by
\begin{equation} \label{weight contour}
w_{n,N}(z):=\Big(\frac{z-a}{z}\Big)^c\frac{e^{-Naz}}{z^n},
\end{equation}
where we choose the principal branch.

Define the matrix function $Y(z)$ by
\begin{equation}
Y(z):= 
\begin{bmatrix}
P_n(z)&\displaystyle\frac{1}{2\pi i}\int_\Gamma\frac{P_n(s)w_{n,N}(s)}{s-z}\, ds
\smallskip 
\\
   Q_{n-1}(z)&\displaystyle\frac{1}{2\pi i}\int_\Gamma\frac{Q_{n-1}(s)w_{n,N}(s)}{s-z} \, ds
\end{bmatrix},
\end{equation}
where $Q_{n-1}$ is a unique polynomial of degree $n-1$ satisfying 
$$
\displaystyle\frac{1}{2\pi i}\int_\Gamma\frac{Q_{n-1}(s)w_{n,N}(s)}{s-z} \, ds =\frac{1}{z^n} \cdot \Big( 1+O(\frac{1}{z}) \Big).
$$
Then it was shown in \cite[Section 3]{MR3280250} that $Y(z)$ is a unique solution to the Riemann-Hilbert problem
\begin{equation} \label{RHP Y}
    \begin{cases}
    Y(z) \mbox{ is holomorphic in $\C\setminus\Gamma$},
    \smallskip 
    \\
    Y_+(z)=Y_-(z)\begin{bmatrix}1&w_{n,N}(z) \\ 
    0&1\end{bmatrix},\quad &z\in \Gamma,
    \smallskip 
    \\
     Y(z)=\Big(I+O(\frac{1}{N})\Big)\begin{bmatrix}z^n&0\\
    0&z^{-n}\end{bmatrix},\quad &z\to \infty.
    \end{cases}
\end{equation}
Here $Y_\pm(z)$ are the boundary values on the sides of the corresponding contour.
Since $P_n(z)=[Y(z)]_{11}$, we aim to analyse the solution to the Riemann-Hilbert problem \eqref{RHP Y}. 
For this purpose, we shall introduce several transforms of \eqref{RHP Y}. 

First, let us define $g$ by
\begin{equation}
g(z)=\begin{cases}
\log z, &z\in \overline{{\rm Ext} {\mathcal S}_a},
\smallskip 
\\
az+\log\beta-a\beta, &z\in {\rm Int} {\mathcal S}_a.
\end{cases}
\end{equation}
Here and in the sequel, we write $\beta=1/a$.
The function $g$ is a building block to define
\begin{equation} \label{phi robin}
\phi(z)=az+\log z-2g(z)+l,\qquad l=\log\beta-a\beta, 
\end{equation}
which satisfies $\re \phi(z)=0$ for $z\in {\mathcal S}_a$.

Following the nonlinear steepest descent method that applied to the above Riemann-Hilbert problem for $Y$, we define 
\begin{equation}\label{rhp z}
Z(z):=e^{\frac{-Nl}{2}\sigma_3}Y(z)e^{-Ng(z)\sigma_3}e^{\frac{Nl}{2}\sigma_3}\begin{bmatrix}1&0\\
    \star \big(\frac{z}{z-a}\big)^ce^{N\phi(z)}&1\end{bmatrix},    \end{equation}
    where
    $$
    \star=\begin{cases}
    1,&\mbox{when $z\in U\cap{\rm Ext}\Gamma$},\\
    -1,&\mbox{when $z\in U\cap{\rm Int}\Gamma$},\\
    0,&\mbox{when $z\notin U$}.
    \end{cases}
    $$
Here $U$ is a neighbourhood of $\SS_a$.
Then by \eqref{RHP Y}, the matrix function $Z$ satisfies the following Riemann-Hilbert problem
\begin{equation}\label{RHP Z}
    \begin{cases}
    Z_+(z)=Z_-(z)\begin{bmatrix}1&0\\
   \big(\frac{z}{z-a}\big)^ce^{N\phi(z)}&1\end{bmatrix},\quad &z\in \partial U,
   \smallskip 
   \\
     Z_+(z)=Z_-(z)\begin{bmatrix}0&\big(\frac{z-a}{z}\big)^c\\
   -\big(\frac{z}{z-a}\big)^c&0\end{bmatrix},\quad &z\in \Gamma\cap U,
   \smallskip 
   \\  
   Z_+(z)=Z_-(z)\begin{bmatrix}1&\big(\frac{z-a}{z}\big)^ce^{-N\phi(z)}\\
   0&1\end{bmatrix},\quad &z\in \Gamma\setminus U,
   \smallskip 
   \\
     Z(z)=I+O(\frac{1}{N}),\quad &z\to \infty,
     \smallskip 
     \\
     Z(z) \mbox{ is holomorphic}, &\mbox{otherwise}.\\
    \end{cases}
\end{equation}

Next, we define the global parametrix
\begin{equation}
  \Phi(z)=  \begin{cases}
    \begin{bmatrix}\big(\frac{z}{z-\beta}\big)^c&0\\
  0& \big(\frac{z-\beta}{z}\big)^c\end{bmatrix},\quad &z\in {\rm Ext}\Gamma,
  \smallskip 
  \\
   \begin{bmatrix}0&\big(\frac{z-a}{z-\beta}\big)^c\\
   -\big(\frac{z-\beta}{z-a}\big)^c&0\end{bmatrix},\quad &z\in {\rm Int}\Gamma.
    \end{cases}
\end{equation}
Then $\Phi$ satisfies the following Riemann-Hilbert problem
\begin{equation}
    \begin{cases}
     \Phi_+(z)=\Phi_-(z)\begin{bmatrix}0&\big(\frac{z-a}{z}\big)^c\\
   -\big(\frac{z}{z-a}\big)^c&0\end{bmatrix},\quad &z\in {\mathcal S}_a,
   \smallskip 
   \\ 
     \Phi(z)=I+O(\frac{1}{N}),\quad &z\to \infty,
     \smallskip 
     \\
     \Phi(z) \quad \mbox{is holomorphic}, &\mbox{otherwise}.\\
    \end{cases}
\end{equation} 
Note that we let $\Gamma$ match $\SS_a$ for $z\in U$ and away from a small neighborhood of $\beta$.

Near the point $\beta$, the jump matrices of $\Phi$ do not converge to those of $Z$.  
Therefore one needs the local parametrix around $\beta$ that satisfies the exact jump conditions of $Z$.  
Moreover, we shall construct a rational matrix function $R$ such that the improved global parametrix, $R\Phi$, matches the local parametrix better. 
This construction is called ``partial  Schlesinger transform'' \cite{bertola2009first}, and it was used in \cite{MR3280250} to obtain the strong asymptotics of $P_n$. Here we use it to derive fine asymptotic behaviours of the orthogonal polynomials (Proposition~\ref{Prop_psin diff}) and the orthogonal norms (Lemma~\ref{Lem_hn diff}).  

Let $D_\beta$ be a disk neighborhood of $\beta$ with a fixed radius such that the map $\zeta: D_\beta\to\C$ given by
\begin{equation}
\zeta:=\sqrt{2N(a(z-\beta)-\log z+\log\beta)}=a\sqrt N(z-\beta)(1+O(z-\beta))
\end{equation}
is univalent.

\begin{figure}[h!]  
\centering
\begin{tikzpicture}[scale=0.8]
\filldraw[color=lightgray!30, fill=lightgray!30, very thin](0,0.2) rectangle (-4,4.2);
\filldraw[color=lightgray!30, fill=lightgray!30, very thin](0,-0.2) rectangle (-4,-4.2);

\draw[thin ] (0,-4.2) -- (0,4.2);
\draw[thin] (-4,0) -- (4,0);
\draw[very thick,blue, postaction={decorate, decoration={markings,  mark = at position 0.55 with {\arrow{>}}}} ] (4,0.2) -- (0,0.2);
\draw[thin,green, postaction={decorate, decoration={markings, mark = at position 0.5 with {\arrow{>}}}} ] (0,0.2) -- (-4,0.2);
\draw[thin,green, postaction={decorate, decoration={markings, mark = at position 0.5 with {\arrow{>}}}} ] (0,0.2) -- (0,4.2);
\draw[very thick,blue,postaction={decorate, decoration={markings, mark = at position 0.5 with {\arrow{>}}}} ] (0,-0.2) -- (4,-0.2);
\draw[thin,green, postaction={decorate, decoration={markings, mark = at position 0.53 with {\arrow{>}}}} ] (-4,-0.2) -- (0,-0.2);
\draw[thin,green, postaction={decorate, decoration={markings, mark = at position 0.5 with {\arrow{>}}}} ] (0,-4.2) -- (0,-0.2);
\draw[very thick,blue,postaction={decorate, decoration={markings, mark = at position 0.5 with {\arrow{>}}}} ] (0,0.2) -- (-4,4.2);
\draw[very thick,blue,postaction={decorate, decoration={markings, mark = at position 0.5 with {\arrow{>}}}} ] (-4,-4.2) -- (0,-0.2);

\foreach \Point/\PointLabel in {(2.8,1)/\Gamma,(2.8,-0.5)/\Gamma, (-1.5,2.6)/U,(-1.5,-2)/U}
\draw[fill=black]  
 \Point node[below right] {$\PointLabel$};
 \end{tikzpicture}
 \caption{The jump contours of $P(z)$ in $D_\beta$. $\Gamma$ are the blue curves, $U$ are the shaded region bounded by the green curves. }
 \end{figure}
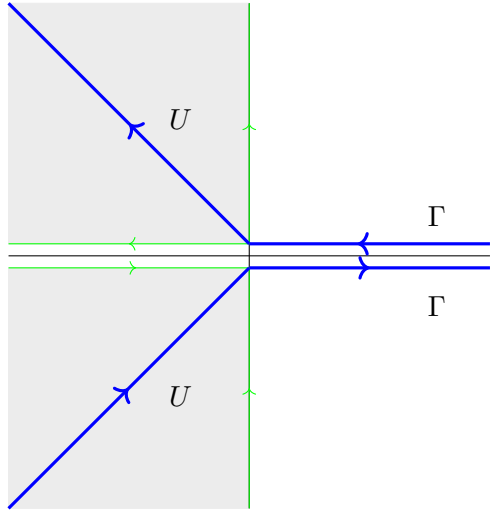

We now define $P:D_\beta\to \C^{2\times 2}$ that satisfies the following Riemann-Hilbert problem
\begin{equation} \label{RHP P}
    \begin{cases}
    P_+(z)=P_-(z)\begin{bmatrix}1&e^{-\frac{\zeta(z)^2}{2}}\\
   0&1\end{bmatrix},\quad &z\in \Gamma\setminus U,
   \smallskip 
   \\
      P_+(z)=P_-(z)\begin{bmatrix}1&0\\
  e^{\frac{\zeta(z)^2}{2}}&1\end{bmatrix},\quad &z\in \partial U\cap{\rm Ext}\Gamma,
  \smallskip 
  \\  P_+(z)=P_-(z)\begin{bmatrix}1&0\\
   e^{-\frac{\zeta(z)^2}{2}}&1\end{bmatrix},\quad &z\in \partial U\cap{\rm Int}\Gamma,
   \smallskip 
   \\
    P_+(z)=\begin{bmatrix}0&-1\\
   1&0\end{bmatrix}P_-(z)\begin{bmatrix}0&1\\
   -1&0\end{bmatrix},\quad &z\in \Gamma\cap U,
   \smallskip 
   \\
     P_+(z)=e^{-c\pi i\sigma_3}P_-(z)e^{c\pi i\sigma_3},\quad &z\in \R,
     \smallskip 
     \\
     P(z) \quad \mbox{is holomorphic}, &\mbox{otherwise}.\\
    \end{cases}
\end{equation} 
and the boundary condition, $P(z)\sim I$ on $\partial D_\beta$.
Using the Riemann-Hilbert problem \eqref{RHP P} for $P$, one can notice that the matrix function
\begin{equation}
\Phi(z)\Big(\frac{z-a}{z}\Big)^{\frac{c}{2}\sigma_3}P(z)\Big(\frac{z-a}{z}\Big)^{-\frac{c}{2}\sigma_3}
\end{equation}
satisfies the jump conditions of $Z$ in \eqref{RHP Z}.

Finally, let us define $W$ by
\begin{equation}
W(z):=\zeta(z)^{-c\sigma_3}S P(z)T(\zeta(z))^{-1}S^{-1},
\end{equation}
where $T$ is a diagonal matrix function
\begin{equation}
T(\zeta)=\begin{cases}
\exp\big(\frac{\zeta^2}{4}\sigma_3\big),\quad &|{\arg}\zeta|<3\pi/4,
\smallskip 
\\
\exp\big(-\frac{\zeta^2}{4}\sigma_3\big),\quad &\mbox{otherwise},
\end{cases}
\end{equation}
and $S$ is a piecewise constant matrix
\begin{equation}
S=\begin{cases}
I,&{\rm Im}\zeta<0\cap|{\arg}\zeta|<3\pi/4,
\smallskip 
\\
e^{c\pi i\sigma_3} \quad &{\rm Im}\zeta>0\cap|{\rm arg}\zeta|<3\pi/4,
\smallskip 
\\
\begin{bmatrix}0&1\\
   -1&0\end{bmatrix},\quad &{\rm Im}\zeta<0\cap|{\rm arg}\zeta|\geq 3\pi/4,
   \smallskip 
   \\
   e^{c\pi i\sigma_3}\begin{bmatrix}0&1\\
   -1&0\end{bmatrix},\quad &{\rm Im}\zeta>0\cap|{\rm arg}\zeta|\geq 3\pi/4.
\end{cases}
\end{equation}
Then $W$ satisfies the following jump conditions,
\begin{equation}\label{jump of w}
 \begin{cases}
     W_+(z)=W_-(z)  \begin{bmatrix}1&1-e^{2c\pi i}
     \\
   0&1\end{bmatrix},\quad &\zeta(z)\in \R^+,
   \smallskip 
   \\
       W_+(z)=W_-(z)  \begin{bmatrix}1&0
       \\
  e^{-2c\pi i}&1\end{bmatrix},\quad &\zeta(z)\in i\R^+,
  \smallskip 
  \\  W_+(z)=W_-(z)  \begin{bmatrix}e^{2c\pi i}&e^{2c\pi i}-1
  \\
  0&e^{-2c\pi i}\end{bmatrix},\quad &\zeta(z) \in \R^-,
  \smallskip 
  \\
     W_+(z)=W_-(z) \begin{bmatrix}1&0
     \\
   -1&1\end{bmatrix},\quad & \zeta(z) \in i\R^-.
    \end{cases}
\end{equation}

\subsection{Asymptotic behaviours of orthogonal polynomials} \label{Subsec_asymptotic OPs}

In this subsection, we prove Proposition~\ref{Prop_psin diff}.

\begin{proof}[Proof of Proposition~\ref{Prop_psin diff}]
Recall that the parabolic cylinder function $D_{-c}$ is given by 
\begin{equation}
D_{-c}(\zeta):= \frac{ e^{\frac{\zeta^2}{4}} }{ i\sqrt{2\pi} } \int_{\varepsilon-i \infty}^{ \varepsilon+i\infty } e^{-\zeta s +\frac{s^2}{2}} s^{-c}\,ds, \qquad \varepsilon>0,
\end{equation}
see e.g. \cite[Chapter 12]{olver2010nist}.
Using this, we define $\mathcal{W}: \C \setminus (\R \cup i \R) \to \C^{2 \times 2}$ by
\begin{equation}
\mathcal W(\zeta):= 
\begin{cases}
\begin{bmatrix}
D_{-c}(\zeta) & \frac{ i \sqrt{2\pi} e^{ \frac{ c \pi i }{2} } }{\Gamma(c)} D_{-1+c}(i \zeta) 
\\
-\frac{ \Gamma(c+1) }{ \sqrt{2\pi} e^{c \pi i} } D_{-1-c}(\zeta)& e^{- \frac{c \pi i}{2} } D_c (i \zeta)
\end{bmatrix}, & -\frac{\pi}{2} < \arg (\zeta) <0, 
\smallskip 
\\
\begin{bmatrix}
D_{-c}(\zeta) & -\frac{ i \sqrt{2\pi} e^{ \frac{ 3c \pi i }{2} } }{\Gamma(c)} D_{-1+c}(-i \zeta) 
\\
-\frac{ \Gamma(c+1) }{ \sqrt{2\pi} e^{c \pi i} } D_{-1-c}(\zeta)& e^{ \frac{c \pi i}{2} } D_c (-i \zeta)
\end{bmatrix}, & 0 < \arg (\zeta) < \frac{\pi}{2} , 
\smallskip 
\\
\begin{bmatrix}
e^{-c\pi i} D_{-c}(-\zeta) & -\frac{ i \sqrt{2\pi} e^{ \frac{ 3c \pi i }{2} } }{\Gamma(c)} D_{-1+c}(-i \zeta) 
\\
\frac{ \Gamma(c+1) }{ \sqrt{2\pi} e^{2c \pi i} } D_{-1-c}(-\zeta)& e^{ \frac{c \pi i}{2} } D_c (-i \zeta)
\end{bmatrix}, & \frac{\pi}{2} < \arg (\zeta) < \pi , 
\smallskip 
\\
\begin{bmatrix}
e^{c \pi i}D_{-c}(-\zeta) & \frac{ i \sqrt{2\pi} e^{ \frac{ c \pi i }{2} } }{\Gamma(c)} D_{-1+c}(i \zeta) 
\\
\frac{ \Gamma(c+1) }{ \sqrt{2\pi}  } D_{-1-c}(-\zeta)& e^{ -\frac{c \pi i}{2} } D_c (i \zeta)
\end{bmatrix}, & \pi < \arg (\zeta) < \frac{3\pi}{2} . 
\end{cases}
\end{equation}
This function is used to define 
\begin{equation}
    W(\zeta)=H(z){\mathcal W}(\zeta(z)),
\end{equation} 
where $H(z)$ is a unimodular holomorphic matrix function on $D_\beta$ that will be determined later. 

By \cite[Lemma 7]{MR3670735}, the function ${\mathcal W}(\zeta(z))$ satisfies the jump conditions of $W$ in \eqref{jump of w}, and the asymptotic behaviour
\begin{equation}
F(\zeta(z)):={\mathcal W}(\zeta)\zeta^{c\sigma_3}e^{\frac{\zeta^2}{4}\sigma_3}=I+\frac{C_1}{\zeta}+\frac{C_2}{\zeta^2}+O(\frac{1}{\zeta^3}), \qquad (|\zeta| \to \infty),
\end{equation}
where
\begin{equation}
C_1=\begin{bmatrix}0&\frac{\sqrt{2\pi}e^{c\pi i}}{\Gamma(c)}\\
 \frac{-\Gamma(c+1)}{\sqrt{2\pi}e^{c\pi i}}&0\end{bmatrix},\qquad C_2=\begin{bmatrix}-\frac{c(c+1)}{2}&0\\
 0&\frac{c(c-1)}{2}\end{bmatrix}.
\end{equation}
Moreover, by \cite[Lemma 9]{MR3670735}, for any positive integer $L$, there exists a positive integer $k$ such that $F(\zeta)$ can be decomposed into 
\begin{equation}\label{def Fcomp}
F(\zeta)F_1(\zeta)^{-1}\dots F_k(\zeta)^{-1}=I+O(\frac{1}{\zeta^L}).    
\end{equation}
In particular, $F_1$ and $F_2$ are given by
\begin{equation} \label{def f1}
F_1(\zeta)=I+\frac{1}{\zeta}\begin{bmatrix}0&\frac{\sqrt{2\pi}e^{c\pi i}}{\Gamma(c)}\\
0&0\end{bmatrix},
\qquad 
 F_2(\zeta)=I+\begin{bmatrix}-\frac{c(c+1)}{2\zeta^2}&\frac{\sqrt{2\pi}e^{c\pi i}c^2(c+1)^2}{4\Gamma(c+1)\zeta^3}\\
-\frac{\Gamma(c+1)}{\sqrt{2\pi}e^{c\pi i}\zeta}&\frac{c(c+1)}{2\zeta^2}\end{bmatrix}.
\end{equation}

Given $\{F_j\}_{j=1}^k$, the sequences $\{H_j\}$ and $\{R_j\}$ can be obtained inductively. 
Assume that $H_{j-1}$ is unimodular holomorphic and nonvanishing at $\beta$. 
When $j=1$, we choose $H_0(z)=I$. We define
\begin{equation}\label{def ftilde}
\widetilde{F}_j(z):=\Big(\frac{z-a}{z}\Big)^{\frac{c}{2}\sigma_3}\Big(\frac{z\zeta(z)}{z-\beta}\Big)^{c\sigma_3}H_{j-1}(z)F_j(\zeta(z))H_{j-1}(z)^{-1}\Big(\frac{z-a}{z}\Big)^{-\frac{c}{2}\sigma_3}\Big(\frac{z\zeta(z)}{z-\beta}\Big)^{-c\sigma_3}.    
\end{equation}
Given $\widetilde{F}_j$ as above, by \cite[Lemma 10]{MR3670735}, the unique rational matrix function $R_j$ can be constructed explicitly such that its only singularity is at $\beta$, $R_j(\infty)=I$, and $R_j(z)\widetilde{F}_j(z)^{-1}$ is holomorphic at $\beta$. 

We define $R_1$, a unimodular meromorphic matrix function with a simple pole at $\beta$, by
\begin{equation}\label{def R1}
    R_1(z)=I+\frac{\sqrt{2\pi}(a^2-1)^c}{N^{1/2-c}a\Gamma(c)(z-\beta)}\begin{bmatrix}0&1\\
0&0\end{bmatrix}.
\end{equation}
Using $R_1$ and $F_1$ in \eqref{def f1}, set
\begin{equation} \label{def H1}
    H_1(z):=\Big(\frac{z-a}{z}\Big)^{-\frac{c}{2}\sigma_3}\Big(\frac{z\zeta(z)}{z-\beta}\Big)^{-c\sigma_3}R_1(z)\Big(\frac{z-a}{z}\Big)^{\frac{c}{2}\sigma_3}\Big(\frac{z\zeta(z)}{z-\beta}\Big)^{c\sigma_3}F_1(\zeta(z))^{-1}.
\end{equation}
Then $H_1$ is unimodular and holomorphic at $\beta$. 

Next, let us write 
\begin{equation} \label{def R2}
R_2(z)=I+\begin{bmatrix}\frac{c_{11}}{z-\beta}+\frac{c_{12}}{(z-\beta)^2}&\frac{c_{21}}{z-\beta}+\frac{c_{22}}{(z-\beta)^2}+\frac{c_{23}}{(z-\beta)^3}\\
\frac{c_{31}}{z-\beta}&\frac{c_{41}}{z-\beta}+\frac{c_{42}}{(z-\beta)^2}\end{bmatrix},
\end{equation}
where $c_{jk}$'s are some constants. 
Using $H_1(z)$ in \eqref{def H1}, $\widetilde{F}_j$ in \eqref{def ftilde} with $j=2$ and the condition that $R_2(z)\widetilde{F}_2(z)^{-1}$ is holomorphic at $\beta$, we have
\begin{equation}
R_2(z)=N^{\frac{c}{2}\sigma_3}\Big(I+\begin{bmatrix}\frac{c^2\beta}{Na(\beta-a)}\frac{1}{z-\beta}-\frac{c(c+1)\beta}{2Na}\frac{1}{(z-\beta)^2}+O(\frac{1}{ N^2})&O(\frac{1}{N})\\O(\frac{1}{\sqrt N}) & O(\frac{1}{\sqrt N}) \end{bmatrix}\Big)N^{-\frac{c}{2}\sigma_3}.    
\end{equation} 
Moreover, by \eqref{def Fcomp}, we have $F_k(\zeta)=I+O(\zeta^{-3})$ for $k\geq 3$. 
Then by \cite[Corollary 1]{MR3670735}, when $z\in\partial D_\beta$ we have
$$
R_k(z)\dots R_3(z)=N^{\frac{c}{2}\sigma_3}(I+O(N^{-3/2})).
$$
Combining the above equation with $R_1$ in \eqref{def R1} and $R_2$ in \eqref{def R2}, for $z\in D_\beta$, we have
\begin{equation}\begin{aligned}
R(z)&=R_k(z)\dots R_1(z)\\
&=N^{\frac{c}{2}\sigma_3}\Big(I+\begin{bmatrix}\frac{c^2\beta}{Na(\beta-a)}\frac{1}{z-\beta}-\frac{c(c+1)\beta}{2Na}\frac{1}{(z-\beta)^2}+O(\frac{1}{ N^2})&\frac{\sqrt{2\pi}(a^2-1)^c}{\sqrt Na\Gamma(c)}\frac{1}{z-\beta}+O(\frac{1}{N})\\O(\frac{1}{\sqrt N}) & O(\frac{1}{\sqrt N}) \end{bmatrix}\Big)N^{-\frac{c}{2}\sigma_3}.
\end{aligned}\end{equation}
Note in particular that 
\begin{equation}\label{R 11}
[R(z)]_{11}=1+\frac{c^2\beta}{Na(\beta-a)}\frac{1}{z-\beta}-\frac{c(c+1)\beta}{2Na}\frac{1}{(z-\beta)^2}+O(\frac{1}{ N^2}). 
\end{equation}

We define $Z^\infty(z)$ by
\begin{equation}\label{def zinf}
  Z^\infty(z):=\begin{cases}
      R(z)\Phi(z),& z\notin D_\beta,\\
 \Phi(z)\Big(\frac{z-a}{z}\Big)^{\frac{c}{2}\sigma_3}P(z)\Big(\frac{z-a}{z}\Big)^{-\frac{c}{2}\sigma_3},& z\in D_\beta.     
  \end{cases}
\end{equation}
By the proof of \cite[Theorem 2]{MR3670735}, we have
\begin{equation}\label{error z}
    Z(z)=\Big(I+O\big(\frac{1}{N^\infty}\big)\Big)Z^\infty(z),
\end{equation}
where the error bound $O(\frac{1}{N^\infty})$ means $O(\frac{1}{N^k})$ for arbitrary integer $k$. Note that the error bound is uniform over any compact subset of the corresponding region.

Using \eqref{rhp z}, for $z$ outside $\mathcal{S}_a$, we have
\begin{align}
\label{RHP of Y}
\begin{split}
    Y(z)&=e^{\frac{Nl}{2}\sigma_3}Z(z)e^{-\frac{Nl}{2}\sigma_3}e^{Ng(z)\sigma_3}
  =e^{\frac{Nl}{2}\sigma_3}\Big(I+O(\frac{1}{N^\infty})\Big)\, R(z)\Phi(z)e^{-\frac{Nl}{2}\sigma_3}z^{N\sigma_3},
\end{split}
    \end{align}
where the second equality follows from \eqref{error z} and \eqref{def zinf}. 
Here $l$ is given by \eqref{phi robin}.  
Then by \eqref{R 11}, we obtain 
\begin{equation}
\begin{aligned}
P_N(z)&=[Y(z)]_{11}=z^N\Big(\frac{z}{z-\beta}\Big)^c[R(z)]_{11}\cdot \Big(1+O(\frac{1}{N^\infty})\Big)\\
&=z^N\Big(\frac{z}{z-\beta}\Big)^c\Big(1+\frac{c^2\beta}{Na(\beta-a)}\frac{1}{z-\beta}-\frac{c(c+1)\beta}{2Na}\frac{1}{(z-\beta)^2}+O(\frac{1}{ N^2})\Big) \cdot \Big(1+O(\frac{1}{N^\infty})\Big),
\end{aligned}
\end{equation} 
which leads to \eqref{psi n}. 
For \eqref{psi n-1} and \eqref{psi n+1}, we shall use the relation 
\begin{equation}
\label{eq op relation}
P_{n,N}(z;a)=\Big(\frac{n}{N}\Big)^{\frac{n}{2}}P_{n,n}\Big(\sqrt{\frac{N}{n}}z,\sqrt{\frac{N}{n}}a\Big).
\end{equation}
Using \eqref{eq op relation}, we have 
\begin{align*}
&\quad P_{N-1}(z)=\Big(\frac{N-1}{N}\Big)^{\frac{N-1}{2}}P_{N-1,N-1}\Big(\sqrt{\frac{N}{N-1}}z,\sqrt{\frac{N}{N-1}}a\Big)
\\
& =z^{N-1}\Big(\frac{\sqrt{\frac{N}{N-1}}z}{\sqrt{\frac{N}{N-1}}z-\sqrt{\frac{N-1}{N}}\beta}\Big)^c \Big(1+O(\frac{1}{N^\infty})\Big)
\\
&\quad \times \Big(1+\frac{c^2\sqrt{\frac{N-1}{N}}\beta /(\sqrt{\frac{N}{N-1}}z-\sqrt{\frac{N-1}{N}}\beta)  }{N\sqrt{\frac{N}{N-1}}a(\sqrt{\frac{N-1}{N}}\beta-\sqrt{\frac{N}{N-1}}a)}-\frac{c(c+1)\sqrt{\frac{N-1}{N}}\beta}{2N\sqrt{\frac{N}{N-1}}a}\frac{1}{(\sqrt{\frac{N}{N-1}}z-\sqrt{\frac{N-1}{N}}\beta)^2}+O(\frac{1}{ N^2})\Big).
\end{align*}
This gives 
\begin{equation}
\begin{aligned}
 P_{N-1}(z)& =z^{N-1}\Big(\frac{z}{z-\beta}\Big)^c\Big(1-\frac{c}{Na(z-\beta)}+O(\frac{1}{ N^2})\Big)\\
&\quad \times \Big(1+\frac{c^2\beta}{Na(\beta-a)}\frac{1}{z-\beta}-\frac{c(c+1)\beta}{2Na}\frac{1}{(z-\beta)^2}+O(\frac{1}{ N^2})\Big)\big(1+O(\frac{1}{N^\infty})\big)\\
&=z^{N-1}\Big(\frac{z}{z-\beta}\Big)^c\Big(1+\frac{c^2\beta}{Na(\beta-a)}\frac{1}{z-\beta}-\frac{c}{Na(z-\beta)}-\frac{c(c+1)\beta}{2Na}\frac{1}{(z-\beta)^2}+O(\frac{1}{ N^2})\Big),
\end{aligned}
\end{equation}
which leads to  \eqref{psi n-1}. 
Similarly, we obtain
\begin{equation}
\begin{aligned}
 P_{N+1}(z)&=\big(\frac{N+1}{N}\big)^{\frac{N+1}{2}}P_{N+1,N+1}\big(\sqrt{\frac{N}{N+1}}z,\sqrt{\frac{N}{N+1}}a\big)\\
& =z^{N+1}\Big(\frac{z}{z-\beta}\Big)^c\Big(1+\frac{c^2\beta}{Na(\beta-a)}\frac{1}{z-\beta}+\frac{c}{Na(z-\beta)}-\frac{c(c+1)\beta}{2Na}\frac{1}{(z-\beta)^2}+O(\frac{1}{ N^2})\Big),
    \end{aligned}
\end{equation}
which gives \eqref{psi n+1}. 
This completes the proof. 
\end{proof}

\subsection{Asymptotic behaviours of orthogonal norms} \label{Subsec_asymptotic norms}

In this subsection, we prove Lemma~\ref{Lem_hn diff}. 

\begin{proof}[Proof of Lemma~\ref{Lem_hn diff}]
By \cite[Proposition 7.1]{MR3280250}, we have 
\begin{equation}\label{eq hn}
    h_n=-\frac{1}{\pi} \frac{\Gamma(c+n+1)}{2iN^{c+n+1}}\frac{\widetilde{h}_n}{P_{n+1,N}(0)}, \qquad \widetilde{h}_n \equiv \widetilde{h}_{n,N}(a):=\int_\Gamma P_{n,N}(z)^2w_{n,N}(z)\, dz.
\end{equation}
Here $w_{n,N}$ is given by \eqref{weight contour}. 
Using \eqref{eq op relation}, we also have
\begin{equation}
\label{eq hn relation}
\widetilde{h}_{n,N}(a)=\Big(\frac{n}{N}\Big)^{\frac{n+1}{2}}\widetilde{h}_{n,n}\Big(\sqrt{\tfrac{N}{n}}a\Big).
\end{equation}

By \cite[Theorem 2]{MR3670735}, for $z\in \Int \mathcal{S}_a\setminus U$, we have
\begin{equation}\label{eq opN}
    P_N(z)=-\frac{\beta^N\sqrt{2\pi}(a^2-1)^ce^{Na(z-\beta)}}{N^{\frac{1}{2}-c}a\Gamma(c)(z-\beta)}\Big(\frac{z-\beta}{z-a}\Big)^c \cdot \Big(1+O(\frac{1}{\sqrt N})\Big).
\end{equation}
Recall here that $\beta=1/a$. 
Combining \eqref{eq op relation} and \eqref{eq opN}, we have
\begin{equation}\label{eq opN1}
\begin{aligned}
 P_{N+1}(z)&=\Big(\frac{N+1}{N}\Big)^{\frac{N+1}{2}}P_{N+1,N+1}\Big(\sqrt{\tfrac{N}{N+1}}z,\sqrt{\tfrac{N}{N+1}}a\Big)
 \\
& =-\frac{(\frac{N+1}{N}\beta)^{N+1}\sqrt{2\pi}(\frac{N}{N+1}a^2-1)^ce^{Na(z-\frac{N+1}{N}\beta)}}{(N+1)^{\frac{1}{2}-c}\frac{N}{N+1}a\Gamma(c)(z-\frac{N+1}{N}\beta)}\Big(\frac{z-\frac{N+1}{N}\beta}{z-a}\Big)^c \cdot \Big(1+O(\frac{1}{\sqrt N})\Big)
        \end{aligned}
\end{equation}
and
\begin{equation}\label{eq opN2}
\begin{aligned}
 P_{N+2}(z)&=\Big(\frac{N+2}{N}\Big)^{\frac{N+2}{2}}P_{N+2,N+2}\Big(\sqrt{\tfrac{N}{N+2}}z,\sqrt{\tfrac{N}{N+2}}a\Big)
 \\
& =-\frac{(\frac{N+2}{N}\beta)^{N+2}\sqrt{2\pi}(\frac{N}{N+2}a^2-1)^ce^{Na(z-\frac{N+2}{N}\beta)}}{(N+2)^{\frac{1}{2}-c}\frac{N}{N+2}a\Gamma(c)(z-\frac{N+2}{N}\beta)}\Big(\frac{z-\frac{N+2}{N}\beta}{z-a}\Big)^c\cdot \Big(1+O(\frac{1}{\sqrt N})\Big).
        \end{aligned}
\end{equation}

Using \eqref{RHP of Y}, we have
\begin{equation}\label{eq hnN}
\begin{aligned}
 \widetilde{h}_N&=-2\pi i  \lim_{z\to\infty}z^{N+1}[Y(z)]_{12}
=-2\pi i  \lim_{z\to\infty}z^{N+1}[R(z)]_{12} \Big(\frac{z-\beta}{z}\Big)^c \frac{\beta^N}{z^Ne^{Na\beta}}\cdot \Big(1+O(\frac{1}{N^\infty})\Big)
 \\
 &=-2\pi i  \lim_{z\to\infty}z^{N+1}\Big(\frac{\sqrt{2\pi}(a^2-1)^c}{N^{\frac{1}{2}-c}a\Gamma(c)}\frac{1}{z-\beta}+O(\frac{1}{N})\Big)\Big(\frac{z-\beta}{z}\Big)^c\frac{\beta^N}{z^Ne^{Na\beta}}\cdot \Big(1+O(\frac{1}{N^\infty})\Big)
 \\
 &=-2\pi i \frac{\beta^N}{e^{Na\beta}} \cdot \Big(\frac{\sqrt{2\pi}(a^2-1)^c}{N^{\frac{1}{2}-c}a\Gamma(c)}+O(\frac{1}{N})\Big).
\end{aligned}
\end{equation}
Combining \eqref{eq hn relation} and \eqref{eq hnN}, we have
\begin{equation}\label{eq hnN-1}
\begin{aligned}
 \widetilde{h}_{N-1}&=\Big(\frac{N-1}{N}\Big)^{\frac{N}{2}}\widetilde{h}_{N-1,N-1}\Big(\sqrt{\tfrac{N}{N-1}}a\Big)
 \\
& =-2\pi i \frac{\beta^{N-1}\big(\frac{N-1}{N}\big)^{N-\frac{1}{2}}}{e^{(N-1)a\beta}}\cdot \Big(\frac{\sqrt{2\pi}(\frac{N}{N-1}a^2-1)^c}{(N-1)^{\frac{1}{2}-c}\sqrt{\frac{N}{N-1}}a\Gamma(c)}+O(\frac{1}{N})\Big)
        \end{aligned}
\end{equation}
and
\begin{equation}\label{eq hnN+1}
\begin{aligned}
\widetilde{h}_{N+1}&=\Big(\frac{N+1}{N}\Big)^{\frac{N+2}{2}}\widetilde{h}_{N+1,N+1}\Big(\sqrt{\tfrac{N}{N+1}}a\Big)
\\
& =-2\pi i \frac{\beta^{N+1}\big(\frac{N+1}{N}\big)^{N+1-\frac{1}{2}}}{e^{(N+1)a\beta}}\cdot \Big(\frac{\sqrt{2\pi}(\frac{N}{N+1}a^2-1)^c}{(N+1)^{\frac{1}{2}-c}\sqrt{\frac{N}{N+1}}a\Gamma(c)}+O(\frac{1}{N})\Big).
        \end{aligned}
\end{equation}

Substituting \eqref{eq hnN} and \eqref{eq opN1} with $z=0$ into \eqref{eq hn}, we obtain
\begin{equation}
h_N = \frac{ \Gamma(N+c+1)}{N^{N+c+1}}\Big(\frac{a^2-1}{1-\frac{N+1}{Na^2}}\Big)^c\Big(\frac{N+1}{N}\Big)^{\frac{1}{2}-c}\frac{e}{(\frac{N+1}{N})^{N+1}} \cdot  \Big( 1+O(\frac{1}{\sqrt{N}}) \Big). 
\end{equation}
Similarly, it follows from \eqref{eq hnN-1}, \eqref{eq opN} and  \eqref{eq hnN+1}, \eqref{eq opN2} that 
\begin{align}
h_{N-1} &= \frac{ \Gamma(N+c)}{N^{N+c}}\Big(\frac{\frac{N}{N-1}a^2-1}{1-\frac{1}{a^2}}\Big)^c\Big(\frac{N}{N-1}\Big)^{\frac{1}{2}-c}\frac{e}{(\frac{N}{N-1})^{N}} \cdot  \Big( 1+O(\frac{1}{\sqrt{N}}) \Big),
\\
h_{N+1} &= \frac{\Gamma(N+c+2)}{N^{N+c+2}}\Big(\frac{\frac{N}{N+1}a^2-1}{1-\frac{N+2}{Na^2}}\Big)^c\Big(\frac{N+2}{N+1}\Big)^{\frac{1}{2}-c}\frac{e}{(\frac{N+2}{N+1})^{N+2}} \cdot  \Big( 1+O(\frac{1}{\sqrt{N}}) \Big) .
\end{align}
Now \eqref{hN-1 c gen}, \eqref{hN c gen} and \eqref{hN+1 c gen} follow from straightforward computations using Stirling's formula. 
\end{proof}

\appendix

\section{Asymptotic analysis for the exactly solvable case $c=1$} \label{Subsec_asymptotic c1}

As a concrete example, we study the case $c=1$ in this appendix.
For this special case, Proposition~\ref{Prop_psin diff} and Lemma~\ref{Lem_hn diff} can be achieved using asymptotic behaviours of some well-known special functions instead of using the Riemann-Hilbert analysis. 
Thus for the readers who are not familiar with Riemann-Hilbert analysis, we provide direct proofs for this exactly solvable case. 

We also remark that indeed, the value $c=1$ also reveals a phase transition in a sense that as the degree of the orthogonal polynomials increases, their zeros approach $\mathcal{S}_a$ in \eqref{Sa a>1} from $\Ext \SS_a$ for $c>1$, and from $\Int \SS_a$ for $c<1$, see \cite[p.308]{MR3670735}.

For $c=1$, we have
\begin{equation}  \label{Pk c1}
P_k(z) = \frac{1}{z-a} \Big(  z^{k+1}-e^{ aN(z-a) } \frac{Q(k+1,Na\,z)}{ Q(k+1,Na^2) } a^{k+1}  \Big)
\end{equation}
and 
\begin{equation} \label{hk c1}
	h_k	= \frac{(k+1)!}{N^{k+2}} \frac{Q(k+2,N a^2)}{ Q(k+1,Na^2) },
\end{equation}
see \cite[Subsection 3.2]{byun2021lemniscate} and \cite[Section 3]{MR1982915}.
Here 
\begin{equation}
Q(a,z):= \frac{\Gamma(a,z)}{\Gamma(a)}=\frac{1}{\Gamma(a)}\int_{z}^\infty t^{a-1}e^{-t}\,dt
\end{equation}
is the regularised incomplete gamma function. 

Using this explicit representation, we show Proposition~\ref{Prop_psin diff} and Lemma~\ref{Lem_hn diff} for $c=1$.

\begin{proof}[Proof of Proposition~\ref{Prop_psin diff} for $c=1$]

By \eqref{Pk c1}, we have
\begin{equation} \label{psik c1}
\psi_k(z)= z^{k+1}-e^{ aN(z-a) } \frac{Q(k+1,Na\,z)}{ Q(k+1,Na^2) } a^{k+1}, 
\end{equation}
which gives 
\begin{equation}
\begin{split} 
\psi_N(z)-z \psi_{N-1}(z) =  e^{ aN(z-a) } a^N \Big( z\, \frac{Q(N,Na\,z)}{ Q(N,Na^2) }-a \,\frac{Q(N+1,Na\,z)}{ Q(N+1,Na^2) }\Big).
\end{split}
\end{equation}

We first recall the asymptotic behaviours of $Q$.
It follows from \cite[Theorem 1.1]{ameur2021szego} that 
\begin{equation} \label{Q asymp outer}
\begin{split}
Q(N,Nz) & = e^{-N z} \frac{N^N}{N!} \frac{z^N}{z-1}\Big[ 1-\frac{z}{(1-z)^2} \frac{1}{N}+O( \frac{1}{N^2} ) \Big]
\\
& = \frac{1}{\sqrt{2\pi N}}\, e^{N-Nz} \frac{z^N}{z-1} \Big[ 1-\Big(\frac{1}{12}+\frac{z}{(1-z)^2}\Big) \frac{1}{N}+O( \frac{1}{N^2} ) \Big]
\end{split}
\end{equation}
for $z$ outside $\mathcal{S}_1$.
Note that if $z$ is outside $\mathcal{S}_a$, then $a z$ is outside $\mathcal{S}_1$.
Then we have
\begin{align}
Q(N,Naz) &= \frac{1}{\sqrt{2\pi N}}\, e^{N-Naz} \frac{(az)^N}{az-1} \Big[ 1-\Big(\frac{1}{12}+\frac{az}{(1-az)^2}\Big) \frac{1}{N}+O( \frac{1}{N^2} ) \Big]. \label{Q(N,Na^2)}
\end{align}
This gives that 
\begin{equation}
\frac{Q(N,N a z)}{ Q(N, N a^2) } = e^{ -aN (z-a) } \Big( \frac{z}{a} \Big)^N \frac{a^2-1}{ az-1 } \Big[ 1+\Big( \frac{a^2}{(1-a^2)^2}-\frac{a z}{ (1-a z)^2 }  \Big)\frac{1}{N}+O(\frac{1}{N^2})  \Big]. 
\end{equation}

Similarly, we have 
\begin{equation} \label{Q(N+1,Na^2)}
\begin{split}
Q(N+1,Naz) = \frac{e^{N-Naz}}{\sqrt{2\pi (N+1)}}\, \frac{(az )^{N+1} }{az  -1} \Big[ 1+\Big( \frac{5}{12}-\frac{1}{(1-a z)^2} \Big)\frac{1}{N}+O( \frac{1}{N^2} ) \Big].
\end{split}
\end{equation}
This gives 
\begin{equation}
\begin{split}
\frac{Q(N+1,N a z)}{ Q(N+1, N a^2) } &=  e^{ -aN (z-a) }  \Big( \frac{z}{a} \Big)^{N+1} \frac{a^2-1}{ az-1 }
 \Big[ 1+\Big(  \frac{1}{(1-a^2)^2} -\frac{1}{(1-az)^2} \Big) \frac{1}{N}+O( \frac{1}{N^2} ) \Big]. 
\end{split}
\end{equation}
We also have 
\begin{equation}
\begin{split} \label{Q(N+2,Na^2)}
Q(N+2,Naz) = \frac{e^{N-Naz}}{\sqrt{2\pi (N+2)}}\, \frac{(az )^{N+2} }{az  -1} \Big[ 1-\Big(  \frac{2}{1-az} +\frac{1}{12}+\frac{az}{(1-az)^2} \Big) \frac{1}{N}+O( \frac{1}{N^2} ) \Big],
\end{split}
\end{equation}
which leads to
\begin{equation}
\frac{ Q(N+2,Naz)  }{ Q(N+2,Na^2) }  =e^{aN(a-z)} \Big( \frac{z}{a} \Big)^{N+2} \frac{a^2-1}{az-1}
 \Big[ 1+\Big( \frac{2}{1-a^2} +\frac{a^2}{(1-a^2)^2}-\frac{2}{1-az} -\frac{az}{(1-az)^2}  \Big)\frac{1}{N}+O( \frac{1}{N^2} ) \Big].
\end{equation}

Therefore by \eqref{psik c1}, we have
\begin{equation}
\begin{split}
\psi_{N-1}(z) &=  z^{N}- z^N \frac{1-a^2}{ 1-az } \Big[ 1+\Big( \frac{a^2}{(1-a^2)^2}-\frac{a z}{ (1-a z)^2 }  \Big)\frac{1}{N}+O(\frac{1}{N^2})  \Big]
\\
&= z^N \Big[ \frac{z-a}{z-\frac{1}{a}} - \frac{1-a^2}{ 1-az }\Big( \frac{a^2}{(1-a^2)^2}-\frac{a z}{ (1-a z)^2 }  \Big)\frac{1}{N}+O(\frac{1}{N^2})  \Big].
\end{split}
\end{equation}
Similarly, we have 
\begin{equation}
\begin{split}
\psi_N(z) &= z^{N+1}-z^{N+1} \frac{1-a^2}{ 1-az }
 \Big[ 1+\Big(  \frac{1}{(1-a^2)^2} -\frac{1}{(1-az)^2} \Big) \frac{1}{N}+O( \frac{1}{N^2} ) \Big] 
 \\
&= z^{N+1}\Big[ \frac{ z-a }{ z-\frac{1}{a} }-\frac{1-a^2}{ 1-az }\Big(  \frac{1}{(1-a^2)^2} -\frac{1}{(1-az)^2} \Big) \frac{1}{N}+O( \frac{1}{N^2} ) \Big]
\end{split}
\end{equation}
and
\begin{equation}
\begin{split}
\psi_{N+1}(z) &=  z^{N+2}-z^{N+2} \frac{1-a^2}{1-az}
 \Big[ 1+\Big( \frac{2}{1-a^2} +\frac{a^2}{(1-a^2)^2}-\frac{2}{1-az} -\frac{az}{(1-az)^2}  \Big) \frac{1}{N}+O( \frac{1}{N^2} ) \Big]
 \\
&=  z^{N+2}\Big[ \frac{ z-a }{ z-\frac{1}{a} }-\frac{1-a^2}{ 1-az }\Big( \frac{2}{1-a^2} +\frac{a^2}{(1-a^2)^2}-\frac{2}{1-az} -\frac{az}{(1-az)^2}  \Big)\frac{1}{N}+O( \frac{1}{N^2} ) \Big].
\end{split}
\end{equation}
Now the proof is complete.
\end{proof}

\begin{proof}[Proof of Lemma~\ref{Lem_hn diff} for $c=1$]
By \eqref{Q(N,Na^2)}, \eqref{Q(N+1,Na^2)} and \eqref{Q(N+2,Na^2)}, we have
\begin{align}
\frac{Q(N+1,N a^2)}{ Q(N,Na^2) }&=a^2\Big(1+ \frac{1}{a^2-1}\frac{1}{N}+O(\frac{1}{N^2}) \Big),
\\
\frac{Q(N+2,N a^2)}{ Q(N+1,Na^2) } & = a^2\Big(1+ \frac{2-a^2}{a^2-1} \frac{1}{N}+O(\frac{1}{N^2}) \Big).
\end{align}
Similarly, we have
\begin{equation}
 \frac{Q(N+3,N a^2)}{ Q(N+2,Na^2) }= a^2\Big(1+ \frac{3-2a^2}{a^2-1} \frac{1}{N}+O(\frac{1}{N^2}) \Big).
\end{equation}
By \eqref{hk c1} and Stirling's formula, we obtain
\begin{align}
	h_{N-1}	&= \frac{N!}{N^{N-1}} \frac{Q(N+1,N a^2)}{ Q(N,Na^2) }= e^{-N}\sqrt{\frac{2\pi}{N}}\, \frac{Q(N+1,N a^2)}{ Q(N,Na^2) } \cdot \Big( 1+\frac{1}{12} \frac{1}{N}+O(\frac{1}{N^2})  \Big),
	\\
		h_N	&= \frac{(N+1)!}{N^{N+2}} \frac{Q(N+2,N a^2)}{ Q(N+1,Na^2) }= e^{-N} \sqrt{\frac{2\pi}{N}}\, \frac{Q(N+2,N a^2)}{ Q(N+1,Na^2) }  \cdot \Big( 1+\frac{13}{12} \frac{1}{N}+O(\frac{1}{N^2})  \Big),
		\\
			h_{N+1}	&= \frac{(N+2)!}{N^{N+3}} \frac{Q(N+3,N a^2)}{ Q(N+2,Na^2) }=  e^{-N} \sqrt{\frac{2\pi}{N}}\, \frac{Q(N+3,N a^2)}{ Q(N+2,Na^2) } \cdot \Big( 1+\frac{37}{12} \frac{1}{N}+O(\frac{1}{N^2})  \Big) .
\end{align}
This completes the proof. 
\end{proof}

\subsection*{Acknowledgements}
This work was highly motivated by the recent work \cite{ameur2021szego} of Yacin Ameur and Joakim Cronvall, and we thank them for stimulating conversations.
It is also our pleasure to thank Christophe Charlier and Seung-Yeop Lee for helpful discussions.

\bibliographystyle{abbrv} %
\bibliography{RMTbib}
\end{document}